\newtheorem{problem}{Problem}
\newtheorem{definition}{Definition}
\newtheorem{theorem}{Theorem}
\newtheorem{lemma}{Lemma}
\newtheorem{remark}{Remark}
\title{\LARGE \bf Field Estimation using Robotic Swarms through Bayesian Regression and Mean-Field Feedback}
\author{Tongjia Zheng and Hai Lin%
\thanks{*This work was supported by the National Science Foundation under Grant No. IIS-1724070, CNS-1830335, IIS-2007949.}
\thanks{Tongia Zheng and Hai Lin are with the Department of Electrical Engineering, University of Notre Dame, Notre Dame, IN 46556, USA (e-mail: tzheng1@nd.edu, hlin1@nd.edu.).}
}
\begin{document}

\maketitle

\thispagestyle{empty}
\pagestyle{empty}

\begin{abstract}
Recent years have seen an increased interest in using mean-field density based modelling and control strategy for deploying robotic swarms. 
In this paper, we study how to dynamically deploy the robots subject to their physical constraints to efficiently measure and reconstruct certain unknown spatial field (e.g. the air pollution index over a city). 
Specifically, the evolution of the robots' density is modelled by mean-field partial differential equations (PDEs) which are uniquely determined by the robots' individual dynamics.
Bayesian regression models are used to obtain predictions and return a variance function that represents the confidence of the prediction.
We formulate a PDE constrained optimization problem based on this variance function to dynamically generate a reference density signal which guides the robots to uncertain areas to collect new data, and design mean-field feedback-based control laws such that the robots' density converges to this reference signal.
We also show that the proposed feedback law is robust to density estimation errors in the sense of input-to-state stability.
Simulations are included to verify the effectiveness of the algorithms.
\end{abstract}

%%%%%%%%%%%%%%%%%%%%%%%%%%%%%%%%%%%%%%%%%%%%%%%%%%%%%%%%%%%%%%%%%%%%%%%%%%%%%%%%
\section{Introduction}
In this paper, we study the deployment and control problem of robotic swarms to quickly and efficiently measure and reconstruct an underlying spatial field. 
Our work is motivated by environment monitoring problems, in which one needs to dynamically deploy mobile sensors to collect data and reconstruct certain quantity (e.g. the air pollution index).

%The challenge lies in that on one hand, we want the robots to take measurements in uncertain areas in order to obtain a more confident prediction of the unknown quantity, while on the other hand, we need to consider the inherent dynamics of the robots and design motion commands in a more efficient way. 
%This work studies how to integrate machine learning techniques into the recent PDE-based control scheme for robotic swarms in order to dynamically deploy the mobile sensors based on the real-time monitoring performance. 
% Considering the real-time requirements of monitoring tasks, we study how to dynamically generate reference signals and deploy the robots according to the real-time monitoring performance in order to take measurements and reconstruct the unknown quantity in a more efficient way.

Employing a large group of mobile sensors provides superior robustness and efficiency, but also poses significant challenges from a control theory perspective. 
Many methods have been proposed in the literature for controlling robotic swarms, such as graph theory based approaches \cite{olfati2007consensus}, game theory (especially mean-field game theory) based design \cite{lasry2007mean}, distributed optimal control based motion planning \cite{foderaro2016distributed}. 
Our work is inspired by the mean-field based modelling and control strategy for robotic swarms \cite{elamvazhuthi2019mean}. 
Earlier efforts need to partition the spatial domain and use an abstracted Markov chain model \cite{accikmecse2012markov}, which suffers from the state explosion issues.   
More recent work that follows the ``mean-field'' philosophy but provides more powerful modelling capabilities is by using partial differential equations (PDEs) \cite{eren2017velocity, krishnan2018distributed, elamvazhuthi2018pde, zheng2020complex}. 
%recent years have seen a growing trend of using compact mean-field models for describing their spatial configuration and using (probability) density-based control strategies for deploying the robots \cite{elamvazhuthi2019mean}.
%One representative makes use of Markov chains, which partitions the spatial domain into disjoint cells over which a probability distribution is defined. One then uses convex optimization \cite{accikmecse2015markov} or density feedback laws \cite{bandyopadhyay2017probabilistic} to compute the transition rates between the cells in order to stabilize a target probability distribution. A well-known limitation is that due to the spatial abstraction, inherent robotic dynamics are largely ignored in this approach. 
%Another approach that follows the ``mean-field'' philosophy but provides more powerful modelling capabilities is by using PDEs \cite{hamann2008framework, meurer2011finite, qi2014multi}.
In this approach, individual robots are modelled by stochastic differential equations and their spatial distribution satisfies an associated PDE, such as the Fokker-Planck equation.
%Although higher-dimensional PDEs are difficult to control, the technique of density feedback proves particularly promising because of its stability guarantee, and has been extensively pursued in recent years \cite{eren2017velocity, krishnan2018optimal, elamvazhuthi2018bilinear}. Hence, the PDE models and density feedback control strategy will be the core of this work.
Existing mean-field PDE approaches usually assume that a target distribution is given and the goal is to design proper mean-field feedback laws to guide the swarm towards the target distribution. 
However, in a field estimation task, the target distribution is not known as it is related to the underneath function/quantity to be measured.

%\a more desirable property is that the robots can automatically deploy themselves based on the real-time monitoring performance.

To estimate an unknown field, the robots need to move to uncertain areas to collect more data.
This goal is related to the resampling problem in regression, which is usually fulfilled using Bayesian regression (BR) \cite{gelman2013bayesian}.
% By defining a prior distribution over certain family of models, these algorithms generate a distribution for its prediction, which can be used to determine the locations for resampling.
For example, resampling can take place at locations with large predictive variance.
However, such resampling techniques are not directly suitable for field estimation tasks because of the inherent dynamics and physcial constraints of the robots -- we cannot instantly change the sampling positions. 
Instead we have to design control laws to physically drive them to the new positions. 
Meanwhile, we need to do it in an energy-efficient and robust manner. 
%Hence, we need to address the problem of how to consider the robotic dynamics in the resampling process.

This motivates us to integrate BR models into the PDE-based control design to achieve automatic deployment based on the real-time prediction quality.
Specifically, BR models are used to construct a prediction for the unknown function, as well as a variance function that represents its predictive confidence. 
We formulate a PDE-constrained optimization problem based on this variance function to generate a reference model which encodes the desired evolution of the robots' density in order to improve the prediction quality.
Then, we design mean-field feedback laws for the robots such that their density evolves according to the reference model, and analyze its robustness in terms of density estimation errors using the notion of input-to-state stability.

The problem of field estimation and deployment has also been studied in recent works \cite{de2018optimal, krishnan2018distributed, elamvazhuthi2018pde, morelli2019integrated}.
In \cite{elamvazhuthi2018pde, morelli2019integrated}, the authors also consider a field estimation problem followed by an optimal control formulation for swarm deployment.
However, the numerical solutions are usually open-loop and have robustness issues due to environmental uncertainty.
In \cite{de2018optimal, krishnan2018distributed}, the authors formulate an optimal mass transport problem followed by mean-field feedback strategies.
Our work differs in that we not only design closed-loop control laws but also study their robustness property.
In summary, our contribution includes two aspects. 
First, we present a candidate framework for integrating machine learning techniques into the mean-field PDE approach for automatically deploying the robots based on real-time performance.
Second, we propose mean-field feedback laws for density tracking problems and prove their robustness to density estimation errors.
In this way, we can formally guarantee that the real-time deployment requirement generated by the machine learning algorithms will be fulfilled.

The rest of the paper is organized as follow. Section \ref{section:preliminaries} introduces some preliminaries. Section \ref{section:problem formulation} gives the problem formulation. Section \ref{section:main results} is our main results in which we use BR models to generate reference models and design mean-field feedback laws for the PDE system to converge to the reference model. Section \ref{section:simulation} presents an agent-based simulation to verify the effectiveness.

\section{Preliminaries}\label{section:preliminaries}
\subsection{Notation}
Let $\mathbb{R}^n$ be the $n$-dimensional Euclidean space. We denote by $\Omega$ an open, bounded, and connected subset of $\mathbb{R}^n$, with boundary $\partial\Omega$. Let $\mu$ be the Lebesgue measure. For $1 \leq p<\infty$, denote by $L^{p}(\Omega)$ the space of functions $f$ such that $\int_{\Omega}|f(x)|^{p} d \mu<\infty$, endowed with the norm $\|f\|_{L^{p}(\Omega)}=\left(\int_{\Omega}|f(x)|^{p} d \mu\right)^{1 / p}$. The gradient and Laplacian of a scalar function $f(x)$ are denoted by $\nabla f$ and $\Delta f$, respectively. The divergence of a vector field $\mathbf{F}$ is denoted by $\nabla \cdot \mathbf {F}$. For $1 \leq p<\infty$ and $k \in \mathbb{N}$, denote by $W^{k, p}(\Omega)$ the Sobolev space of functions $f \in L^{p}(\Omega)$ having weak derivatives $D^\alpha f$ in $L^{p}(\Omega)$ for all multi-indices $\alpha$ of length $|\alpha| \leq k$, endowed with the norm $\|f\|_{W^{k, p}(\Omega)}=\left(\sum_{|\alpha| \leq k} \int_{\Omega}\left|\nabla^{\alpha} f(x)\right|^{p} d \mu\right)^{1 / p}$.

\subsection{Input-to-state stability}
Input-to-state stability (ISS) is a stability notion for studying nonlinear systems with external inputs \cite{sontag1995characterizations}. 
To define the ISS concept we need to introduce the following classes of comparison functions \cite{dashkovskiy2013input}:
\begin{align*}
    \mathcal{K} &:= \{\gamma :\mathbb{R}_{+} \to \mathbb{R}_{+}|\gamma \text{ is continuous and strictly} \\
    &\quad\quad \text{increasing}, \gamma (0)=0, \text{and } \gamma (r)>0 \text{ for }r>0\}\\
    \mathcal{K}_{\infty} &:=\{\gamma \in \mathcal{K}|\gamma \text{ is unbounded}\}  \\
    \mathcal{L} &:= \{\gamma  : \mathbb{R}_{+} \to \mathbb{R}_{+} | \gamma \text{ is continuous and strictly } \\
    &\quad\quad \text{decreasing with } \lim_{t\to \infty} \gamma (t) = 0\}  \\
    \mathcal{KL} &:=\{\beta:\mathbb{R}_{+} \times\mathbb{R}_{+} \to\mathbb{R}_{+} |\beta \text{ is continuous}, \beta(\cdot,t)\in \mathcal{K}, \\
    &\quad\quad \beta(r,\cdot)\in \mathcal{L}, \forall t\geq 0, \forall r>0\}.
\end{align*}

\begin{definition}
\cite{dashkovskiy2013input} Consider a control system $\Sigma = (X, U, \phi)$ consisting of normed linear spaces $(X, \|\cdot\|_X)$ and $(U, \|\cdot\|_U)$, called the state space and the input space, endowed with the norms $\|\cdot\|_X$ and $\|\cdot\|_U$ respectively, and a transition map $\phi:\mathbb{R}_{+}\times X \times U \to X$. The system is called ISS if there exist functions $\beta\in \mathcal{KL}$ and $\gamma\in \mathcal{K}$, such that
$$
\|\phi(t,\phi_0,u)\|_X \leq \beta(\|\phi_0\|_X,t) + \gamma(\|u\|_U),
$$
holds $\forall \phi_0\in X$, $\forall t\geq 0$ and $\forall u\in U$. It is called locally input-to-state stable (LISS), if there also exists constants $\rho_{x}, \rho_{u}>0$ such that the above inequality holds $\forall \phi_{0} :\left\|\phi_{0}\right\|_{X} \leq \rho_{x}, \forall t \geq 0$ and $\forall u \in U :\|u\|_{U} \leq \rho_{u}$.
\end{definition}

The following lemma provides a tool for verifying the (L)ISS property by constructing Lyapunov functionals.

\begin{lemma} \label{lmm:ISS Lyapunov functional}
\cite{dashkovskiy2013input} If there exists a continuous functional $V : X \to \mathbb{R}_{+}$, functions $a_1,a_2,a\in  \mathcal{K}_\infty$, $\rho \in  \mathcal{K}$, and constants $\rho_{x}, \rho_{u}>0$, such that:
$$
a_1(\|x\|_X) \leq V (x) \leq a_2(\|x\|_X), \forall x \in X,
$$
and $\forall \phi_{0} :\left\|\phi_{0}\right\|_{X} \leq \rho_{x}$, $\forall u \in U :\|u\|_{U} \leq \rho_{u}$, $\Dot{V}(x)$ satisfies
$$\Dot{V}(x) \leq -a(\|x\|_X), \forall \|x\|_X \geq \rho(\|u\|_U),$$
then the system is LISS. If $\rho_{x}=\infty$ and $\rho_{u}=\infty$, then the system is ISS. The corresponding functional $V$ is called an (L)ISS Lyapunov functional. 
\end{lemma}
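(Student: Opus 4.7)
The plan is to reduce the (L)ISS conclusion to a scalar decrease estimate on $V$ along trajectories. Treat the global case ($\rho_x = \rho_u = \infty$) first; the local version follows by identical reasoning restricted to the specified balls. Fix $\phi_0 \in X$ and $u \in U$, and introduce the threshold $c_u := a_2(\rho(\|u\|_U))$. The first step is to establish forward invariance of the sublevel set $\{x : V(x) \leq c_u\}$: whenever $V(x(t)) > c_u$, the lower sandwich bound gives $\|x(t)\|_X > \rho(\|u\|_U)$, so by hypothesis $\dot V(x(t)) \leq -a(\|x(t)\|_X) < 0$. A continuity/crossing argument then shows that once a trajectory enters $\{V \leq c_u\}$ it cannot exit, and in general $V(x(t)) \leq \max\{V(\phi_0), c_u\}$ for all $t \geq 0$.

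Next I would quantify the decay while the trajectory remains outside the sublevel set. Using $\|x\|_X \geq a_2^{-1}(V(x))$ together with the hypothesis, I get the autonomous scalar inequality
\begin{equation*}
\dot V(x(t)) \;\leq\; -(a \circ a_2^{-1})\bigl(V(x(t))\bigr),
\end{equation*}
with $a \circ a_2^{-1} \in \mathcal{K}_\infty$. Invoking Sontag's lemma on $\mathcal{KL}$ estimates for such differential inequalities yields a function $\tilde\beta \in \mathcal{KL}$ with $V(x(t)) \leq \tilde\beta(V(\phi_0), t)$ while $V(x(t)) > c_u$. Combining with the invariance bound gives $V(x(t)) \leq \tilde\beta(V(\phi_0), t) + c_u$ for all $t \geq 0$.

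The final step is to convert this into a norm estimate via the sandwich inequality. Setting
\begin{equation*}
\beta(r,t) := a_1^{-1}\bigl(2\tilde\beta(a_2(r),t)\bigr), \qquad \gamma(s) := a_1^{-1}\bigl(2 a_2(\rho(s))\bigr),
\end{equation*}
and using subadditivity of $a_1^{-1}$ on sums of nonnegatives (absorbed into the factor $2$), I obtain $\|x(t)\|_X \leq \beta(\|\phi_0\|_X, t) + \gamma(\|u\|_U)$, which is precisely the ISS estimate with $\beta \in \mathcal{KL}$ and $\gamma \in \mathcal{K}$. For the LISS case one restricts to $\|\phi_0\|_X \leq \rho_x$ and $\|u\|_U \leq \rho_u$ and additionally checks that the relevant sublevel set remains inside the region in which the hypothesis is available; this is automatic provided $\rho_x$ is chosen so that $a_1^{-1}(\tilde\beta(a_2(\rho_x),0) + c_{\rho_u})$ stays within the admissible set.

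The main obstacle I anticipate is making sense of $\dot V(x(t))$ along trajectories of the underlying system, which in the present paper lives in an infinite-dimensional space governed by a PDE: the map $t \mapsto V(x(t))$ need not be classically differentiable. The standard remedy is to interpret $\dot V$ as the upper right Dini derivative and to invoke a comparison principle valid for Dini derivatives; with that substitution, the argument sketched above is unchanged. A secondary, lower-stakes point is the Sontag $\mathcal{KL}$-lemma step, which requires only the $\mathcal{K}_\infty$ regularity of $a \circ a_2^{-1}$ and hence goes through without additional structural assumptions on $a$ or $a_2$.
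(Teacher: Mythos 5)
The paper does not prove this lemma at all---it is quoted verbatim from the cited reference \cite{dashkovskiy2013input}, so there is no in-paper proof to compare against. Your argument is the standard (and correct) one from that literature: forward invariance of the sublevel set $\{V\le a_2(\rho(\|u\|_U))\}$, a Sontag-type $\mathcal{KL}$ comparison estimate $\dot V\le -(a\circ a_2^{-1})(V)$ outside it, and conversion back to norms via the sandwich bounds with the weak subadditivity $a_1^{-1}(s+r)\le a_1^{-1}(2s)+a_1^{-1}(2r)$; you also correctly flag the only genuinely delicate points in the infinite-dimensional setting, namely interpreting $\dot V$ as a Dini derivative along trajectories and using the corresponding comparison principle, and shrinking $\rho_x,\rho_u$ in the local case so the trajectory stays where the hypotheses apply.
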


\section{Problem Formulation}\label{section:problem formulation}
This work studies the problem of dynamically deploying a robotic swarm to measure and construct an unknown function $f(x)$ on a domain $\Omega\subset\mathbb{R}^m$. 
% To have a concrete understanding, one can think of $\Omega$ as a city and $f(x)$ as the air pollution index. 
% A group of robots (such as UAVs) are sent to the city to measure the air pollution index. 
Denote by $\{X_i(t)\}_{i=1}^n\subset\mathbb{R}^m$ the robots' positions, where $n$ is the population. 
Each robot obtains a noisy measurement $y_i$ of $f(X_i)$ such that $y_i=f(X_i)+\epsilon$, where $\epsilon$ follows an i.i.d. Gaussian distribution. 
We assume a sampling period $\Delta t$ so that measurements are taken at discrete time $k$ with $t=k\Delta t$. 
At each step $k$, we obtain a data set $D_k=\left\{\left(X_{i}(k),y_{i}(k)\right)|i=1,\dots,n\right\}$. 
We denote $\mathcal{D}_k=\cup_{j=1}^kD_j$ to represent all the data collected by time $k \Delta t$.
The robots' motion is assumed to satisfy
\begin{equation} \label{eq:Langevin equation}
    dX_i=v(X_i,t)dt+\sqrt{2\sigma(X_i,t)}dB_t, \quad i = 1,\dots,n,
\end{equation}
where $X_i\in\Omega$ is the position of the $i$-th robot, $v\in\mathbb{R}^m$ is the velocity field that acts on the robot, $B_t\in\mathbb{R}^m$ is an $m$-dimensional Wiener process which represents stochastic motions, and $\sqrt{2\sigma}\in\mathbb{R}$ is the standard deviation.

The density of the robots, denoted by $p(x,t)$, is known to be governed by the following Fokker-Planck equation:
\begin{align} \label{eq:FP equation}
\begin{split}
     \partial_t p =-\nabla\cdot(vp) + \Delta(\sigma p) &\quad\text{in}\quad \Omega\times(0,\infty), \\
    p=p_0 &\quad\text{on}\quad \Omega\times\{0\},\\
    \boldsymbol{n} \cdot(\nabla(\sigma p)-vp)=0 &\quad\text{on}\quad \partial\Omega\times(0,\infty),
\end{split}
\end{align}
where $\boldsymbol{n}$ is the unit inner normal to the boundary $\partial\Omega$, and $p_0$ is the initial density. The last equation is a \textit{reflecting boundary condition} to confine the swarm within $\Omega$. 

\begin{problem}
Our goal to design the velocity field $v$ to guide the robots' movements in order to efficiently reconstruct the unknown $f(x)$ with the increasingly rich data collection $\mathcal{D}_k$.
\end{problem}

This problem is essentially a resampling problem -- we want to take new measurements to construct a better prediction $\Bar{f}$. 
It however poses additional challenges because we need to consider the robots' physical constraints and design suitable motion commands to make the resampling more efficient.

\section{Main results}\label{section:main results}
\subsection{Generation of reference models}\label{section:generate reference model}
This section studies the problem of generating reference models for the swarm, i.e. where to resample.
Intuitively, we expect the robots to move to the areas where the prediction $\Bar{f}$ is less confident.
BR models turn out to fulfill this purpose because besides predicting the function value $\Bar{f}(x)$, they also return the variance $\operatorname{Var}[\Bar{f}(x)]$ to represent its confidence.
Our objective is to minimize the variance by resampling $f$.

Specifically, given $\mathcal{D}_k$, we use BR models (in particular Gaussian process regression models \cite{rasmussen2003gaussian}) to obtain a prediction $\Bar{f}_k(x)$ and its associated variance $\mathcal{V}_k(x):=\operatorname{Var}[\Bar{f}_k(x)]$ for all $x\in\Omega$.
% Considering the motivating example of reconstructing the air pollution index, we choose the SE covariance function \eqref{eq:squared exponential kernel}.
% (Different covariance functions can be used depending on specific problems.)
% It is easy to verify that with this choice, the predictive variance $\mathcal{V}(x)$ is a smooth function of $x$.
We want the robots to move to areas with larger $\mathcal{V}_k(x)$.
We define $\mathcal{W}_k(x)=\max\{\mathcal{V}_k(x)-\eta,0\}$ and construct a target density $p_f(x,k\Delta t) = \frac{\mathcal{W}_k(x)}{\int_\Omega\mathcal{W}_k(x)dx}$, where $\eta$ is a small tolerance to ensure that the algorithm terminates.
We would like to formulate an optimization problem for \eqref{eq:FP equation} to reach $p_f$.
An optimization formulation enables us to impose many practical requirements, such as penalizing the high density area to avoid concentration (which helps reducing robot-robot collision) or penalizing the velocity to save energy.
Note that $p_f$ is time-varying because $\mathcal{D}_k$ grows with time.
However, since the robots move continuously, $\mathcal{V}_k(x)$ does not change significantly within a short interval, for which there is no need to perform regression and update the reference model at every step.
We choose to update the reference model periodically with a period $T$ ($T\gg\Delta t$), and within each period, $p_f$ is held fixed.
Let $t_c$ be the current time and $t_f:=t_c+T$.
We formulate an optimization problem:
\begin{align}\label{eq:optimization problem}
\begin{split}
    &J=\int_{\Omega}\phi\big(p(t_{f}),t_{f}\big)dx+\int_{t_c}^{t_c+T} \int_{\Omega}L(p,v,t)dxdt \\
    &\text{s.t. } \partial_t p=-\nabla \cdot[v(x,t)p(x,t)] \quad\text{in } \Omega \\
    &\qquad p(x,t_c)=p_0 \\
    &\qquad vp\cdot \mathbf{n} = 0 \quad\text{on } \partial\Omega
\end{split}
\end{align}
where the constraint is chosen to be a transport equation.
% For example, we may choose
% $$
% J(p,v,t)=\left.\int_{\Omega} w_{p}(p_f-p)^{2}\right|_{t_{f}}dx + \int_{\Omega}\int_{t_{0}}^{t_{f}} e^{\frac{w_{v}}{2}\left(v_{x}^{2}+v_{y}^{2}\right)}dtdx
% $$
% which forces the terminal state to be close to $p_f$ and also punishes the velocity magnitude.
(Certainly we can subject to the \eqref{eq:FP equation}, but it will unnecessarily complicate the optimality conditions to be derived and its numerical solution.)
% As to be seen in the simulation, the Neumann condition can be easily satisfied using sine series parametrization.
% Subjecting to a transport equation facilitates not only the derivation of the optimality conditions and but also the tracking control design in the next section.
Problem \eqref{eq:optimization problem} is a PDE-constrained optimization problem. 
We follow the procedure in \cite{rudd2013generalized} to obtain the necessary conditions for optimal solutions:
\begin{equation}\label{eq:optimality conditions}
\begin{array}{cl}
    \text{State equation:} &\displaystyle\frac{\partial p}{\partial t} =-\nabla\cdot(vp),\\
    &\text{s.t. } p(x,0)=p_0(x) \\
    &\qquad vp\cdot \mathbf{n} = 0 \quad\text{on } \partial\Omega \\
    \text{Co-state equation:} &\displaystyle\frac{\partial\lambda}{\partial t} = \frac{\partial L}{\partial p}-\nabla\lambda\cdot v,\\
    &\text{s.t. } \lambda(x,t_f)=\frac{\partial\phi}{\partial p}|_{t_f} \\
    &\qquad \lambda = 0 \quad\text{on } \partial\Omega \\
    \text{Optimization condition:} &\displaystyle \frac{\partial L}{\partial v}=p\nabla\lambda,
\end{array}
\end{equation}
{\color{black}
where we corrected a mistake on the boundary condition in \cite{rudd2013generalized}. 
(Note that there is lack of math rigor when applying the calculus of variations for finite dimensional systems to a PDE. 
We will formally study its well-posedness issue in our future work.)
% (The complete derivation can be found in an extended version on arXiv.)
% The optimality conditions \eqref{eq:optimality conditions} don't have closed-form solutions in general and have to be solved numerically \cite{rudd2013generalized}.
% A generalize reduced gradient method is presented in  to solve these equations.
A solution of \eqref{eq:optimality conditions} represents a locally optimal trajectory of $p_\text{r}$ and $v_\text{r}$, so we obtain a reference model:
}
\begin{align}\label{eq:reference model}
\begin{split}
    \partial_tp_\text{r}=-\nabla\cdot(v_\text{r}p_\text{r}) &\quad\text{in}\quad \Omega\times(0,\infty), \\
    p_\text{r}=p_0 &\quad\text{on}\quad \Omega\times\{0\},\\
    \qquad v_\text{r}p_\text{r}\cdot \mathbf{n} = 0 &\quad\text{on}\quad \partial\Omega\times(0,\infty),
\end{split}
\end{align}
which describes the desired density evolution for the robots.
The reference control law $v_\text{r}$ obtained in this way is open-loop.
The remaining task is to design feedback laws $v$ for \eqref{eq:FP equation} such that its solution $p$ converges to the solution of \eqref{eq:reference model}.

% \begin{remark}
% Note that the robots are consecutively collecting new data when they are moving to new areas. Therefore, we should recompute the prediction and its variance, and reconstruct the reference model after certain time period.
% \end{remark}

\subsection{Density tracking control}
In this section, we study how to design $v$ for the robots such that their density $p$ evolves according to the reference model.
Our deign is inspired by the recent work \cite{eren2017velocity}, where a \textit{mean-field feedback}, namely designing $v$ as a function of $p$, was proposed.  
Given desired trajectories of $p_{\text{r}}$ and $v_\text{r}$ from \eqref{eq:reference model},
we define the tracking error as $\Phi=p-p_{\text{r}}$ and design $v$ such that $\Phi$ satisfies the diffusion equation:
\begin{equation} \label{eq:diffusion equation}
    \partial_t\Phi(x,t) = \nabla\cdot[\alpha(x,t)\nabla\Phi(x,t)],
\end{equation}
where $\alpha$ is the diffusion coefficient. 
Under mild conditions on $\alpha$, its solution converges to a constant function, which will be 0 because for any $t$, $\int_{\Omega}\Phi dx=\int_{\Omega} pdx-\int_{\Omega} p_\text{r}dx=1-1=0$.
We propose the mean-field feedback law:
\begin{equation} \label{eq:density feedback law}
    v=-\frac{\alpha(x,t)\nabla(p-p_{\text{r}})-\nabla(\sigma p)-v_{\text{r}}p_{\text{r}}}{p},
\end{equation}
where $\alpha$ is a design parameter that can be used to locally adjust the velocity magnitude.
We require $\sup_{x\in\Omega,t\geq0}\alpha(x,t)<\infty$ and $\inf_{x\in\Omega,t\geq0}\alpha(x,t)>0$.
The convergence property of \eqref{eq:density feedback law} is given below.

\begin{theorem}[Exponential stability]
\label{thm:exponential stability}
Consider the PDE system \eqref{eq:FP equation} with control law \eqref{eq:density feedback law}.
If the solution satisfies $p>0$, then $\|\Phi\|_{L^2(\Omega)}\to0$ exponentially.
% satisfies
% $$\|\Phi(x,t)\|_2\leq \|\Phi(x,0)\|_2 e^{-\beta t},$$
% where $\beta = \frac{\sqrt{2\alpha}}{C}$ and $C$ is a positive constant depending only on $\Omega$.
\end{theorem}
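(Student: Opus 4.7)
The plan is to first verify, by direct substitution, that the feedback law \eqref{eq:density feedback law} reduces the closed-loop Fokker-Planck equation \eqref{eq:FP equation} to the advertised linear diffusion equation \eqref{eq:diffusion equation} for the error $\Phi = p - p_{\text{r}}$, and then to establish exponential decay in $L^2(\Omega)$ via a Lyapunov functional together with the Poincar\'e inequality for zero-mean functions. The hypothesis $p>0$ is what makes the feedback well-defined and is used in this step.

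First I would compute $vp$ from \eqref{eq:density feedback law}, namely $vp = -\alpha\nabla(p - p_{\text{r}}) + \nabla(\sigma p) + v_{\text{r}}p_{\text{r}}$, substitute into \eqref{eq:FP equation}, and observe that the $\Delta(\sigma p)$ terms cancel. Using $\partial_t p_{\text{r}} = -\nabla\cdot(v_{\text{r}}p_{\text{r}})$ from \eqref{eq:reference model}, subtracting yields $\partial_t\Phi = \nabla\cdot(\alpha\nabla\Phi)$, which is \eqref{eq:diffusion equation}. I would also check the boundary condition: the reflecting condition $\mathbf{n}\cdot(\nabla(\sigma p)-vp)=0$ together with $v_{\text{r}}p_{\text{r}}\cdot\mathbf{n}=0$ from the reference model collapses to the Neumann condition $\mathbf{n}\cdot(\alpha\nabla\Phi)=0$ on $\partial\Omega$. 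Moreover, since $p$ and $p_{\text{r}}$ are both probability densities on $\Omega$, we have the mass-conservation identity $\int_\Omega\Phi\,dx = 0$ for all $t\geq 0$.

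Next I would introduce the Lyapunov functional
\begin{equation*}
V(\Phi) = \tfrac{1}{2}\|\Phi\|_{L^2(\Omega)}^2 = \tfrac{1}{2}\int_\Omega \Phi^2\,dx.
\end{equation*}
Differentiating along trajectories of \eqref{eq:diffusion equation} and integrating by parts, the boundary term vanishes thanks to the Neumann condition just derived, giving
\begin{equation*}
\dot V = \int_\Omega \Phi\,\nabla\cdot(\alpha\nabla\Phi)\,dx = -\int_\Omega \alpha|\nabla\Phi|^2\,dx \leq -\alpha_{\min}\|\nabla\Phi\|_{L^2(\Omega)}^2,
\end{equation*}
where $\alpha_{\min} := \inf_{x,t}\alpha(x,t) > 0$ by assumption.

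The final step, and the one that ties the bound to $V$ itself, is the zero-mean Poincar\'e inequality: there exists $C_P > 0$, depending only on $\Omega$, such that every $\Phi\in W^{1,2}(\Omega)$ with $\int_\Omega\Phi\,dx = 0$ satisfies $\|\Phi\|_{L^2(\Omega)}^2 \leq C_P\|\nabla\Phi\|_{L^2(\Omega)}^2$. Applying this to the preceding bound gives $\dot V \leq -(2\alpha_{\min}/C_P)\,V$, and Gr\"onwall's inequality then yields
\begin{equation*}
\|\Phi(\cdot,t)\|_{L^2(\Omega)} \leq \|\Phi(\cdot,0)\|_{L^2(\Omega)}\,e^{-(\alpha_{\min}/C_P)\,t},
\end{equation*}
proving exponential convergence. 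The main subtlety I anticipate is not the Lyapunov step itself but the well-posedness background: one must ensure a sufficiently regular $p>0$ so that \eqref{eq:density feedback law} is defined, all integrations by parts are justified, and $\Phi$ lies in $W^{1,2}(\Omega)$ with zero mean so that Poincar\'e applies. Under the paper's standing smoothness assumptions and the positivity hypothesis in the theorem, these are standard.
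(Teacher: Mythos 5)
Your proof is correct and follows the same route as the paper: substitute the feedback law to reduce the closed-loop dynamics to the Neumann heat equation $\partial_t\Phi=\nabla\cdot(\alpha\nabla\Phi)$ for the zero-mean error, then conclude exponential $L^2$ decay. The paper simply cites the literature for that last step, whereas you carry out the standard Lyapunov--Poincar\'e--Gr\"onwall argument explicitly; both are fine.
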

\begin{proof}
Substituting \eqref{eq:density feedback law} into \eqref{eq:FP equation}, we obtain 
\begin{align*} 
\begin{split}
    \partial_t\Phi = \nabla\cdot(\alpha\nabla\Phi) & \quad\text{in}\quad \Omega\times(0,\infty), \\
    \Phi=\Phi_0 & \quad\text{on}\quad \Omega\times\{0\},\\
    \boldsymbol{n} \cdot\nabla\Phi=0 & \quad\text{on}\quad \partial\Omega\times(0,\infty),
\end{split}
\end{align*}
which is a diffusion equation.
Its stability is well-known in the PDE literature.
A proof can be found in \cite{zheng2020transporting}.
% To show its exponential stability, consider a Lyapunov functional
% $$
% V(t)=\frac{1}{2}\|\Phi\|_{L^2(\Omega)}^2=\frac{1}{2}\int_{\Omega}\Phi^2 dx. 
% $$
% Define $\alpha_{\text{min}}(t):=\inf_{x\in\Omega}\alpha(x,t)>0$. 
% We have
% \begin{align*}
%     \Dot{V}(t) &= \int_{\Omega}\Phi \partial_t\Phi dx\\
%     &= \int_{\Omega}\Phi \nabla\cdot[\alpha(x,t)\nabla\Phi]dx\\
%     &= \int_{\partial{\Omega}}\Phi[\alpha(x,t)\nabla\Phi\cdot\boldsymbol{n}]ds-\int_{\Omega}\alpha(x,t)\nabla\Phi\cdot\nabla\Phi dx\\ 
%     &\quad \text{(by the divergence theorem)}\\
%     &= - \alpha_{\text{min}}(t)\int_{\Omega} |\nabla\Phi|^2 dx \quad \text{(by the boundary condition)}\\
%     &\leq -\frac{\alpha_{\text{min}}(t)}{C^2} \int_{\Omega} |\Phi|^2 dx, \quad \text{(by the Poincar\'e inequality)}
% \end{align*}
% where $C$ is the positive constant from the Poincar\'e inequality which only depends on $\Omega$. 
% Since $\alpha$ has a uniform positive lower bound, we obtain exponential stability.
\end{proof}

To be well-defined, the control law \eqref{eq:density feedback law} requires that $p>0$. 
This requirement can be satisfied if we replace $p$ with an estimate $\Hat{p}$. 
In this case, the control law is given by
\begin{equation}\label{eq:density feedback law using estimation}
    v=-\frac{\alpha(x,t)\nabla(\Hat{p}-p_{\text{r}})-\nabla(\sigma\Hat{p})-v_{\text{r}}p_{\text{r}}}{\Hat{p}},
\end{equation}
where $\Hat{p}$ is obtained using kernel density estimation \cite{rao1986nonparametric}, i.e.
{\color{black}
\begin{equation} \label{eq:KDE}
\hat{p}(x,t) = \frac{1}{n h^{m}} \sum_{i=1}^{n} K\left(\frac{1}{h}\left(x-X_{i}(t)\right)\right),
\end{equation}
where $K(x)=\frac{1}{(2 \pi)^{m/2}} \exp \left(-\frac{1}{2}x^\intercal x\right)$ is the Gaussian kernel.
% (Many boundary correction methods exist for refining $\Hat{p}$ to have compact support \cite{silverman1986density}, so we shall not worry about this.)
% As a result, \eqref{eq:FP equation} under \eqref{eq:density feedback law using estimation} is well-posed.
\begin{remark}
We should point out that the proposed framework is essentially centralized because we require a communication center that communicates with the robots to perform regression, solve the optimization problem and estimate the density.
The deployment algorithm is given in Algorithm \ref{algorithm:deployment}.
\end{remark}

\begin{algorithm}[h]
{\color{black}
\SetAlgoLined
Each robot $i$ sends its position $X_i$ and measurement $(X_i,y_i)$ to the center\;
The center estimates $\Hat{p}$ using $\{X_i\}_{i=1}^n$ and performs regression on $\mathcal{D}_0$ to obtain $\mathcal{V}_0$\;
\While{$\sup_x\mathcal{V}_k>\gamma$}{
    The center generates a reference model and sends $\{(p_{\text{r}}(x,t),v_{\text{r}}(x,t)\}_{t=0}^{T}$ to all robots\;
    \For{$t=t_c:\Delta t:t_c+T$}{
        Each robot computes its own velocity \eqref{eq:density feedback law using estimation} and sends its new position and measurement to the center\;
        The center estimates and sends $\Hat{p}$ to all robots\;
    }
    The center performs regression on $\mathcal{D}_k$ to obtain $\mathcal{V}_k$\;
}
\caption{Deployment algorithm}
\label{algorithm:deployment}
}
\end{algorithm}
}

% \begin{algorithm}
% \caption{Deployment algorithm}
% \begin{algorithmic}[1]
% % \Procedure{Roy}{$a,b$} 
% % \Comment{This is a test}
% \State Each robot $i$ sends its position $X_i$ and measurement $(X_i,y_i)$ to the center.
% \State The center estimates $\Hat{p}$ using $\{X_i\}_{i=1}^n$ and performs regression using $\mathcal{D}_0$ to obtain $\mathcal{V}_0$.
% \While{$\sup_x\mathcal{V}_k>\gamma$} 
%     \State The center generates a reference model and sends $\{(p_{\text{r}}(x,t),v_{\text{r}}(x,t)\}_{t=0}^{T}$ to all robots.
%     \For{$t=t_c:\Delta t:t_c+T$}
%     \State Each robot computes its own velocity command according to \eqref{eq:density feedback law using estimation} and sends its new position and new measurement to the center.
%     \State The center estimates $\Hat{p}$ and sends it to all robots.
%     \EndFor
%     \State The center performs regression using $\mathcal{D}_k$ to obtain $\mathcal{V}_k$.
% \EndWhile
% \end{algorithmic}
% \label{algorithm:deployment}
% \end{algorithm}

\subsection{Robustness of the control law}
Since we replace the original control law \eqref{eq:density feedback law} with \eqref{eq:density feedback law using estimation}, we should discuss its robustness issue with respect to density estimation errors.
A similar problem has been studied in our previous work \cite{zheng2020transporting}.
The control law \eqref{eq:density feedback law using estimation} is more complicated than the one in \cite{zheng2020transporting}.
% in which we studied a regulation problem, while this work studies a tracking control problem.
Nevertheless, the techniques used there apply to this work.
Before presenting the robustness results, we shall discuss the solution property of \eqref{eq:FP equation}.
We denote by $f_i$ the $i$-th component of a vector field $f$ and denote $\partial_iu:=\partial u/\partial x_i$ for a function $u(x,t)$.
The following lemma is proved in \cite{zheng2020transporting}.

\begin{lemma}[Well-posedness \cite{zheng2020transporting}]
\label{thm:well-posedness}
% (\textbf{Well-posedness} \cite{zheng2020transporting}). 
Assume 
\begin{equation}\label{eq:regularity condition1}
    v_i,\sigma,\partial_i\sigma\in L^\infty(\Omega\times(0,T)),\forall i\text{ and } p_0\in L^\infty(\Omega).
\end{equation}
Then there exists a unique weak solution $p$ for \eqref{eq:FP equation} with $p\in H^1(\Omega\times(0,T))$ and $\int_{\Omega}p(\cdot,t)dx=1$ for $\forall t\in(0,T]$. If we further assume
\begin{equation}\label{eq:regularity condition2}
    \partial_iv_i,\partial_i^2\sigma\in L^\infty(\Omega\times(0,T)),\forall i,
\end{equation} 
then $p_0\geq(\text{or}>)0$ implies $p\geq(\text{or}>)0$ for $t\in[0,T]$.
\end{lemma}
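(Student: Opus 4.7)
My plan is to recognize \eqref{eq:FP equation} as a linear parabolic equation in divergence form. Using the identity $\Delta(\sigma p)=\nabla\cdot(\sigma\nabla p + p\nabla\sigma)$, the equation becomes
\[\partial_t p = \nabla\cdot\bigl(\sigma\nabla p + (\nabla\sigma - v)p\bigr),\]
with reflecting boundary $\boldsymbol{n}\cdot(\sigma\nabla p + (\nabla\sigma-v)p)=0$. Under \eqref{eq:regularity condition1} the principal coefficient $\sigma$ is bounded (and, implicitly, bounded below away from zero so that the problem is uniformly parabolic), and the effective drift $\nabla\sigma - v$ lies in $L^\infty$, placing us in the classical setting of linear divergence-form parabolic equations with bounded measurable coefficients.

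For existence and the $H^1$ regularity, I would run the standard Galerkin procedure. Pick $\{w_k\}\subset H^1(\Omega)$ to be the eigenbasis of the Neumann Laplacian — a natural choice compatible with the no-flux boundary condition — and construct approximants $p_N=\sum_{k=1}^N c_k^N(t)w_k(x)$ solving the projected weak equation, which reduces to a linear ODE system for the coefficients. Testing that equation against $p_N$ itself and using the vanishing of the boundary flux yields an energy identity that, after Young's inequality and Gr\"onwall's lemma, gives uniform bounds of $p_N$ in $L^\infty(0,T;L^2)\cap L^2(0,T;H^1)$ and of $\partial_t p_N$ in $L^2(0,T;H^{-1})$. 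Weak compactness then produces a limit $p\in H^1(\Omega\times(0,T))$ that is a weak solution of \eqref{eq:FP equation}. Uniqueness follows by applying the same energy estimate to the difference of two solutions with zero initial data.

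Mass conservation I would obtain essentially for free: inserting the test function $\varphi\equiv 1\in H^1(\Omega)$ into the weak formulation reduces the spatial divergence term to a boundary integral that vanishes by the reflecting condition, so $\frac{d}{dt}\int_\Omega p(\cdot,t)\,dx=0$ and the integral stays at its initial value $1$. For the preservation of sign under the stronger assumption \eqref{eq:regularity condition2}, my approach is a Stampacchia truncation: testing against $p_-:=\max(-p,0)\in H^1$, the extra regularity $\partial_iv_i,\partial_i^2\sigma\in L^\infty$ lets me perform a second integration by parts on the lower-order terms to obtain a differential inequality for $\|p_-\|_{L^2}^2$ that forces it to zero whenever $p_0\geq 0$. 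Strict positivity from $p_0>0$ is then a consequence of the strong parabolic maximum principle, or, more intuitively, of the probabilistic representation through the underlying SDE \eqref{eq:Langevin equation}, whose transition density is strictly positive at all interior points.

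The hard part, as usual, will be the rigorous handling of the reflecting boundary condition in the weak setting: one has to verify that the co-normal flux $\boldsymbol{n}\cdot(\sigma\nabla p+(\nabla\sigma-v)p)$ makes distributional sense as a trace so that the boundary integrals actually vanish in the integration by parts, and to check that the Galerkin basis is compatible with this boundary behavior so that the finite-dimensional problems converge to the right limit. The need for the additional regularity \eqref{eq:regularity condition2} in the sign-preservation step is, in essence, a reflection of exactly the same difficulty: it is precisely what makes the chain rule valid for $p_-$ in the presence of the drift-reaction term $(\nabla\sigma-v)p$ and thereby justifies the Stampacchia test function rigorously.
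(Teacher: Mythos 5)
The paper never proves this lemma --- it is imported wholesale from \cite{zheng2020transporting} (``The following lemma is proved in \cite{zheng2020transporting}'') --- so there is no in-paper argument to compare against. Your outline is the standard route for divergence-form parabolic problems with a conormal (no-flux) boundary condition and is, in substance, the right one: the rewriting $\Delta(\sigma p)=\nabla\cdot(\sigma\nabla p+p\nabla\sigma)$, Galerkin approximation with energy estimates, mass conservation by testing with $\varphi\equiv 1$, and truncation for sign preservation are exactly what one expects the cited reference to do. You also correctly flag the two points the lemma glosses over: that uniform parabolicity ($\inf\sigma>0$) must be assumed even though only $\sigma\in L^\infty$ is written, and that the flux condition has to be read as the natural boundary condition of the weak formulation.

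Two gaps are worth naming. First, your compactness step does not deliver the stated regularity: uniform bounds in $L^\infty(0,T;L^2)\cap L^2(0,T;H^1)$ together with $\partial_t p_N$ bounded in $L^2(0,T;H^{-1})$ yield a limit in the usual solution space $\{u\in L^2(0,T;H^1(\Omega)):\partial_t u\in L^2(0,T;(H^1(\Omega))')\}$, which is \emph{not} contained in $H^1(\Omega\times(0,T))$; the latter requires $\partial_t p\in L^2(\Omega\times(0,T))$ and hence a second energy estimate (testing against $\partial_t p_N$), which in turn needs $p_0\in H^1(\Omega)$ or parabolic smoothing away from $t=0$. So either the lemma's $H^1(\Omega\times(0,T))$ claim needs more hypotheses or your argument needs this extra step. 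Second, your explanation of why \eqref{eq:regularity condition2} is needed is off: the chain rule for $p_-:=\max(-p,0)$ is valid for any $H^1$ function, and the cross term $\int_\Omega(\nabla\sigma-v)\cdot p_-\nabla p_-\,dx$ can already be absorbed by Young's inequality under \eqref{eq:regularity condition1} alone. What \eqref{eq:regularity condition2} actually buys is a bounded zeroth-order coefficient $\Delta\sigma-\nabla\cdot v$ when the equation is written in non-divergence form (equivalently, when that cross term is integrated by parts), which is what the weak maximum principle --- and in particular the strict-positivity conclusion --- requires. Your appeal to the strong maximum principle or to the SDE representation for $p_0>0\Rightarrow p>0$ is fine in spirit, but at this coefficient regularity it rests on Moser/Krylov--Safonov-type Harnack estimates and should be flagged as such rather than treated as classical.
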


{\color{black}
Due to our choice of $K(x)$ in \eqref{eq:KDE}, $\hat{p}\in C^{\infty}(\Omega) \times C((0, T))$. 
Since $\Omega$ is bounded and $n$ is finite, we have $\inf_{x,t}\hat{p}(x, t)>0$ and $\sup_{x,t}\partial_{i}^{k}\hat{p}(x,t)<\infty$ for any $k\in\mathbb{N}$.
Therefore, if we additionally have $p_0>0$, $\sigma,p_{\text{r}},\in W^{2,\infty}(\Omega)\times L^\infty((0,T))$, $\alpha,v_{\text{r},i}\in W^{1,\infty}(\Omega)\times L^\infty((0,T))$, then the system \eqref{eq:FP equation} under \eqref{eq:density feedback law using estimation} satisfies the regularity conditions \eqref{eq:regularity condition1} and \eqref{eq:regularity condition2}.
These requirements are mild and can be easily satisfied in practice.

Now we discuss the robustness issue, which can arise in two situations.
First, any estimation algorithm contains estimation error due to finite samples.
Second, for \eqref{eq:density feedback law using estimation} to satisfy the regularity conditions, sometimes we have to correct $\Hat{p}$.
For example, we may want to add a small constant to $\Hat{p}$ and renormalize it when $\Hat{p}$ is close to 0, which introduces artificial estimation errors. 
We define $\epsilon:=\Hat{p}/p-1$. 
Then $\epsilon=0$ if and only if $\Hat{p}= p$, for which we treat $\epsilon$ as the estimation error.
Although seemingly unusual, this error model is more suitable for density estimates using KDE.
An equivalent way to define $\epsilon$ is that $\Hat{p}=p+\epsilon p$, where the additive noise is weighted by $p$.
It is known that under mild conditions, the estimation error of KDE is asymptotically Gaussian with a covariance proportional to the true density \cite{rao1986nonparametric}.
In other words, the estimation error is more uncertain when the true density is larger.
Hence, weighting the noise by $p$ makes $\epsilon$ a more uniform error model.
}
The robustness result is given below.

\begin{theorem}[ISS]
\label{thm:ISS}
Consider the PDE system \eqref{eq:FP equation} with control law \eqref{eq:density feedback law using estimation}. Assume the regularity conditions \eqref{eq:regularity condition1} and \eqref{eq:regularity condition2} are satisfied and $p_0>0$. 
Then the tracking error $\Phi$ is ISS in $L^2$ with respect to disturbances $\left\|\frac{\nabla\epsilon}{1+\epsilon}\right\|_{L^2}$ and $\left\|\frac{\epsilon}{1+\epsilon}\right\|_{L^2}$.
\end{theorem}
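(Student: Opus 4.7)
The strategy is to invoke Lemma~\ref{lmm:ISS Lyapunov functional} with the natural candidate $V(\Phi)=\tfrac{1}{2}\|\Phi\|_{L^2(\Omega)}^2$, which is immediately sandwiched between two $\mathcal K_\infty$ functions of $\|\Phi\|_{L^2}$; only the dissipation inequality remains to be established.

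First I would substitute \eqref{eq:density feedback law using estimation} into \eqref{eq:FP equation} to derive the closed-loop error dynamics. Writing $\hat p=(1+\epsilon)p$, the key identity $\tfrac{1}{1+\epsilon}\nabla(\sigma\hat p)=\nabla(\sigma p)+\tfrac{\sigma p}{1+\epsilon}\nabla\epsilon$ causes $\Delta(\sigma p)$ to cancel against the $\nabla(\sigma\hat p)$ contribution modulo an $\epsilon$-weighted remainder, while the decomposition $\hat p-p_\text{r}=\Phi+\epsilon p$ isolates $\nabla\Phi$ inside the $\alpha$-flux. Subtracting the reference model, routine bookkeeping yields
\[
\partial_t\Phi=\nabla\!\cdot\!(\alpha\nabla\Phi)+\nabla\!\cdot\! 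D_1+\nabla\!\cdot\! D_2,
\]
with $D_1=(\alpha-\sigma)p\,\tfrac{\nabla\epsilon}{1+\epsilon}$ and $D_2=(\alpha\nabla p_\text{r}+v_\text{r}p_\text{r})\,\tfrac{\epsilon}{1+\epsilon}$. The same substitution applied to the reflecting boundary condition, together with $v_\text{r}p_\text{r}\cdot\boldsymbol n=0$ from \eqref{eq:reference model}, collapses it to $\boldsymbol n\cdot(\alpha\nabla\Phi+D_1+D_2)=0$ on $\partial\Omega$, which is precisely the flux data needed to discard the boundary term in the upcoming integration by parts.

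Differentiating $V$ along trajectories and integrating by parts then gives $\dot V=-\int_\Omega\alpha|\nabla\Phi|^2\,dx-\int_\Omega(D_1+D_2)\cdot\nabla\Phi\,dx$. Young's inequality absorbs half of the dissipation into the cross terms. By Lemma~\ref{thm:well-posedness}, both $p$ and $p_\text{r}$ are mass-preserving, so $\int_\Omega\Phi\,dx\equiv 0$ and Poincaré-Wirtinger yields $\|\Phi\|_{L^2}^2\le C_P\|\nabla\Phi\|_{L^2}^2$. Under the standing regularity assumptions, the multipliers $(\alpha-\sigma)p$ and $\alpha\nabla p_\text{r}+v_\text{r}p_\text{r}$ are uniformly bounded by constants $M_1,M_2$, so $\|D_i\|_{L^2}$ is bounded by $M_i$ times the corresponding disturbance norm. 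Collecting these estimates produces
\[
\dot V\le -\tfrac{\alpha_{\min}}{2C_P}\|\Phi\|_{L^2}^2+\tfrac{M_1^2}{\alpha_{\min}}\Big\|\tfrac{\nabla\epsilon}{1+\epsilon}\Big\|_{L^2}^2+\tfrac{M_2^2}{\alpha_{\min}}\Big\|\tfrac{\epsilon}{1+\epsilon}\Big\|_{L^2}^2,
\]
which, after trivial rearrangement into the form $\dot V\le-a(\|\Phi\|_{L^2})$ whenever $\|\Phi\|_{L^2}\ge\rho(\|u\|_U)$, is the dissipation hypothesis of Lemma~\ref{lmm:ISS Lyapunov functional} with $\rho_x=\rho_u=\infty$, delivering global ISS in $L^2$.

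The hard part is the algebraic simplification in the second step: every residue after peeling off the nominal $\nabla\!\cdot\!(\alpha\nabla\Phi)$ must package cleanly into a divergence $\nabla\!\cdot\! D_i$ whose multiplier is independent of $\epsilon$, and the residual boundary flux must simultaneously be $\boldsymbol n$-orthogonal so nothing survives on $\partial\Omega$. A secondary concern is the uniform $L^\infty$ bound on $p$ required to make the constants $M_i$ finite: Lemma~\ref{thm:well-posedness} only supplies $H^1$ regularity and positivity, so one either invokes a maximum-principle argument exploiting the (now smooth) closed-loop drift or promotes $p\in L^\infty$ to a standing hypothesis consistent with the mild regularity already assumed on $\sigma,\alpha,p_\text{r},v_\text{r}$.
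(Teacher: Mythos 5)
Your proposal is correct and follows essentially the same route as the paper's proof: the same Lyapunov functional $V=\tfrac12\|\Phi\|_{L^2}^2$, the identical decomposition of the closed-loop error dynamics into $\nabla\cdot(\alpha\nabla\Phi)$ plus two divergence-form disturbance terms with multipliers $(\alpha-\sigma)p$ and $\alpha\nabla p_{\text{r}}+v_{\text{r}}p_{\text{r}}$, the Poincar\'e inequality on the zero-mean error, and the invocation of Lemma \ref{lmm:ISS Lyapunov functional}. The only differences are cosmetic (Young's inequality plus rearrangement in place of the paper's $\theta$-split gain condition), and your explicit flagging of the uniform $L^\infty$ bound on $p$ needed to make the disturbance gains finite is if anything more careful than the paper's terse justification at that step.
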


\begin{proof}
The proof is a modification of the proof of Theorem 3 in \cite{zheng2020transporting}.
First, Lemma \ref{thm:well-posedness} implies that $p(\cdot,t)$ is absolutely continuous.
Substituting \eqref{eq:density feedback law using estimation} into \eqref{eq:FP equation} and using $\hat{p}=p(1+\epsilon)$, we obtain
\begin{align*}
\begin{split}
    \partial_tp &= \nabla\cdot\left[p\frac{\alpha\nabla(\hat{p}-p_{\text{r}})-\nabla
    (\sigma\hat{p})-v_{\text{r}}p_{\text{r}}}{\hat{p}}\right]+\Delta(\sigma p)\\
    &=\nabla\cdot\left[\alpha\nabla(p-p_{\text{r}})+\frac{(\alpha-\sigma)p\nabla\epsilon+\alpha\epsilon\nabla p_{\text{r}}-v_{\text{r}}p_{\text{r}}}{1+\epsilon}\right].
\end{split}
\end{align*}
Consider a Lyapunov function $V(t)=\frac{1}{2}\int_\Omega(p-p_{\text{r}})^2dx$. 
By the Divergence theorem and the boundary condition, we have
\begin{align*}
    \Dot{V} &= \int_\Omega(p-p_{\text{r}})(\partial_tp-\partial_tp_{\text{r}})dx\\
    &= -\int_\Omega\nabla(p-p_{\text{r}})\cdot\Big[\alpha\nabla(p-p_{\text{r}})+v_{\text{r}}p_{\text{r}}\\
    &\qquad+\frac{(\alpha-\sigma)p\nabla\epsilon+\alpha\epsilon\nabla p_{\text{r}}-v_{\text{r}}p_{\text{r}}}{1+\epsilon}\Big]dx\\
    &= \int_{\Omega}-\alpha|\nabla\Phi|^{2}-\nabla\Phi\cdot\frac{(\alpha-\sigma)p\nabla\epsilon+(\alpha\nabla p_{\text{r}}+v_{\text{r}}p_{\text{r}})\epsilon}{1+\epsilon}dx
\end{align*}
Let $\alpha_{\min}:=\inf_{x,t}\alpha(x,t)>0$.
Since $p>0$, there exists a constant $\beta>0$ such that
\begin{align*}
    \Dot{V}\leq&-\alpha_{\min}\|\nabla\Phi\|_{L^2}^2+\beta\|\alpha-\sigma\|_{L^\infty}\|\nabla\Phi\|_{L^2}\left\|\frac{\nabla\epsilon}{1+\epsilon}\right\|_{L^2}\\
    &+\|\alpha\nabla p_{\text{r}}+v_{\text{r}}p_{\text{r}}\|_{L^\infty}\|\nabla\Phi\|_{L^2}\left\|\frac{\epsilon}{1+\epsilon}\right\|_{L^2}
\end{align*}
Fix a constant $\theta\in(0,1)$ to split the first term and apply the Poincar\'e inequality \cite{lieberman1996second} for the first two terms.
We obtain
\begin{equation*} 
    \begin{aligned}
    \Dot{V}
    &\leq -\frac{\alpha_{\min}(1-\theta)}{C^2} \|\Phi\|_{L^2}^2
    - \frac{\alpha_{\min}\theta}{C} \|\nabla\Phi\|_{L^2} \|\Phi\|_{L^2}\\
    &\quad +\beta\|\alpha-\sigma\|_{L^\infty}\|\nabla\Phi\|_{L^2}\left\|\frac{\nabla\epsilon}{1+\epsilon}\right\|_{L^2}\\
    &\quad +\|\alpha\nabla p_{\text{r}}+v_{\text{r}}p_{\text{r}}\|_{L^\infty}\|\nabla\Phi\|_{L^2}\left\|\frac{\epsilon}{1+\epsilon}\right\|_{L^2},
\end{aligned}
\end{equation*}
where $C>0$ is the constant from the Poincar\'e inequality.
Hence, we will have
\begin{equation*}
    \Dot{V}\leq -\frac{\alpha_{\min}(1-\theta)}{C^2} \|\Phi\|_{L^2}^2
\end{equation*}
if
\begin{align*}
    \|\Phi\|_{L^2}\geq&\frac{C\|\alpha\nabla p_{\text{r}}+v_{\text{r}}p_{\text{r}}\|_{L^\infty}\left\|\frac{\epsilon}{1+\epsilon}\right\|_{L^2}}{\alpha_{\min}\theta}\\
    &+\frac{C\beta\|\alpha-\sigma\|_{L^\infty}\left\|\frac{\nabla\epsilon}{1+\epsilon}\right\|_{L^2}}{\alpha_{\min}\theta}.
\end{align*}
where the right-hand side is a class $\mathcal{K}$ function of $\left\|\frac{\nabla\epsilon}{1+\epsilon}\right\|_{L^2}$ and $\left\|\frac{\epsilon}{1+\epsilon}\right\|_{L^2}$.
According to Lemma \ref{lmm:ISS Lyapunov functional}, we obtain the ISS property.
\end{proof}

This theorem means that with the mean-field feedback control law \eqref{eq:density feedback law using estimation}, the density tracking error $\|\Phi\|_{L^2(\Omega)}$ remains bounded in the presence of estimation error $\epsilon$ and converges asymptotically if $\epsilon=0$.

\section{Simulation studies}\label{section:simulation}
{\color{black}
In this section, we simulate a group of 100 robots to take measurements and reconstruct a sinc function $f(x)=2+\frac{\sin(2\|x\|)}{\|x\|}$ over $\Omega=[0,20]^2$ (see Fig. \ref{fig:sinc and prediction error}).
The robots' initial positions are drawn from a uniform distribution over $[0,7]^2$.
We follow Algorithm \ref{algorithm:deployment} to conduct the simulation, which contains two loops.
In the inner loop, the robots take measurements with a sampling period $\Delta t=0.875s$ and noise $\epsilon\sim\mathcal{N}(0,0.04)$.
In the outer loop, we update the reference model with a period $T=7s$, i.e. after receiving every 8 new data sets.
}
To generate the reference model, we follow the procedure in Section \ref{section:generate reference model} to obtain the target density $p_f$ and formulate the optimization problem, in which we set
\begin{align*}
    \phi(p(t_f),t_f)&=w_f[D(p(t_f)\|p_f)+D(p_f\|p(t_f))]\\
    L(p,v,t)&=w_p[D(p\|p_f)+D(p_f\|p)]+w_v\sum_{i=1}^{d}v_i^2
\end{align*}
where $w_f$, $w_p$ and $w_v$ are weights, and $D(p,q)=p\log\frac{p}{q}$.
{\color{black}
The optimality conditions \eqref{eq:optimality conditions} are solved using the generalized reduced gradient method given in \cite{rudd2013generalized}.
It discretizes the time axis into $N$ intervals and parameterizes every element $v_l\in\mathbb{R}^m,l=1,\dots,N$ using the Fourier sine series:
\[
v_l(x)=\sum_{i=1}^{I}\sum_{j=1}^{J}\sin (i\pi x_1/b)\sin(i\pi x_1/c)a_{ijl}(t)
\]
which forces $v_l(x)$ to be zero on the boundary of a rectangular work space $[0,b]\times[0,c]$ and hence satisfies the boundary condition of \eqref{eq:reference model}.
The coefficients $a_{ijl}\in\mathbb{R}^m$ are to be determined.
Here $N=12$, $m=2$ and $I=J=8$.
}
With initialized $a_{ijl}$, approximations of $p$ and $\lambda$ are obtained by numerically solving the state and co-state equations in \eqref{eq:optimality conditions}. 
Holding these two approximations fixed, the coefficients $a_{ijl}$ are updated by a gradient-based algorithm that minimizes the augmented cost.
This process is iterated until it converges to a local minimum that satisfies the optimization condition in \eqref{eq:optimality conditions}. 
The obtained coefficients $a_{ijk}$ are used to compute $v_{\text{r}}$ and generate $p_{\text{r}}$ according to \eqref{eq:reference model}.

\begin{figure}
\setlength{\abovecaptionskip}{0.0cm}
\setlength{\belowcaptionskip}{-0.3cm}
    \centering
    \begin{subfigure}[b]{0.23\textwidth}
        \centering
        \includegraphics[width=\textwidth]{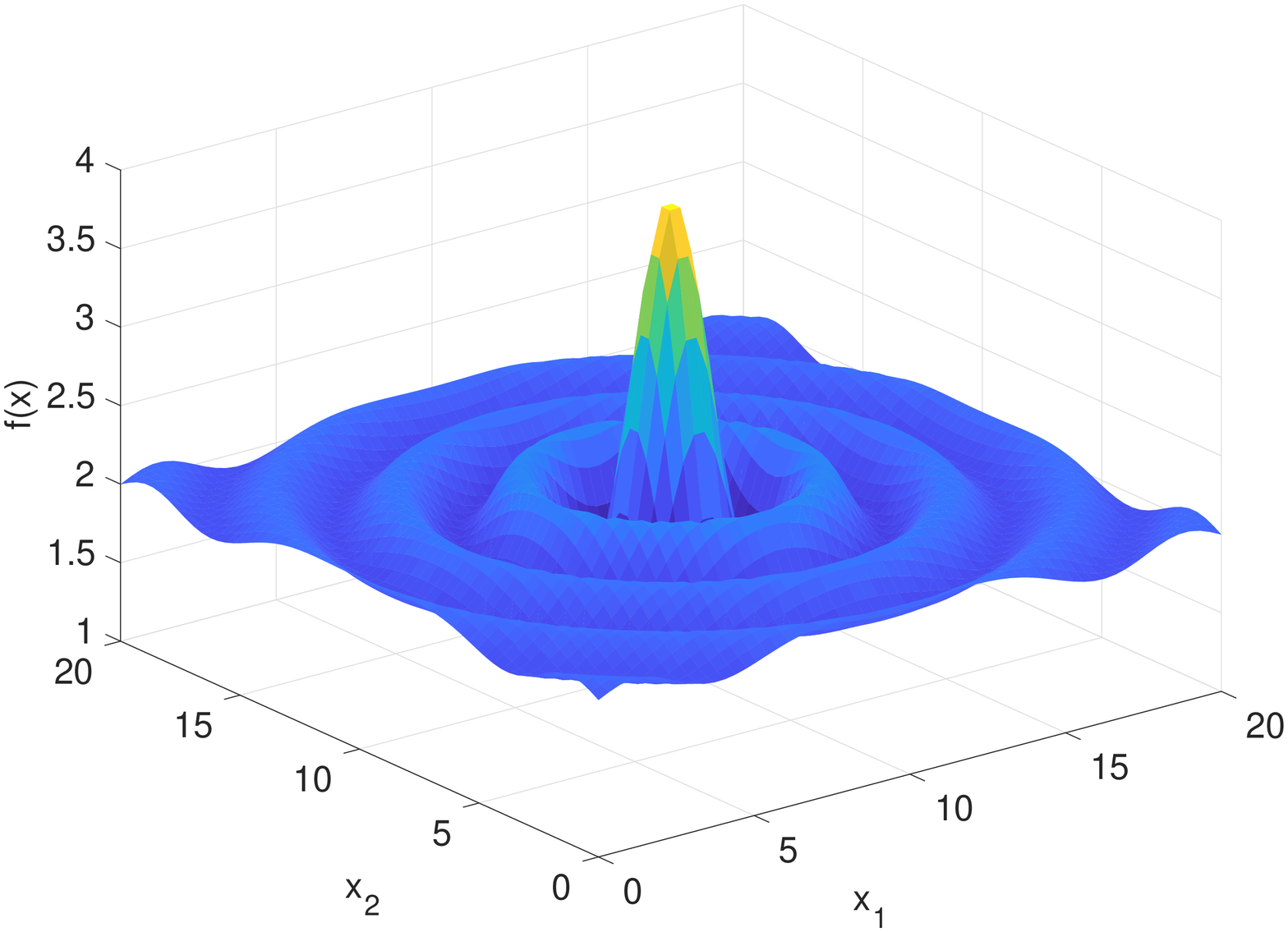}
    \end{subfigure}
    \begin{subfigure}[b]{0.23\textwidth}
        \centering
        \includegraphics[width=\textwidth]{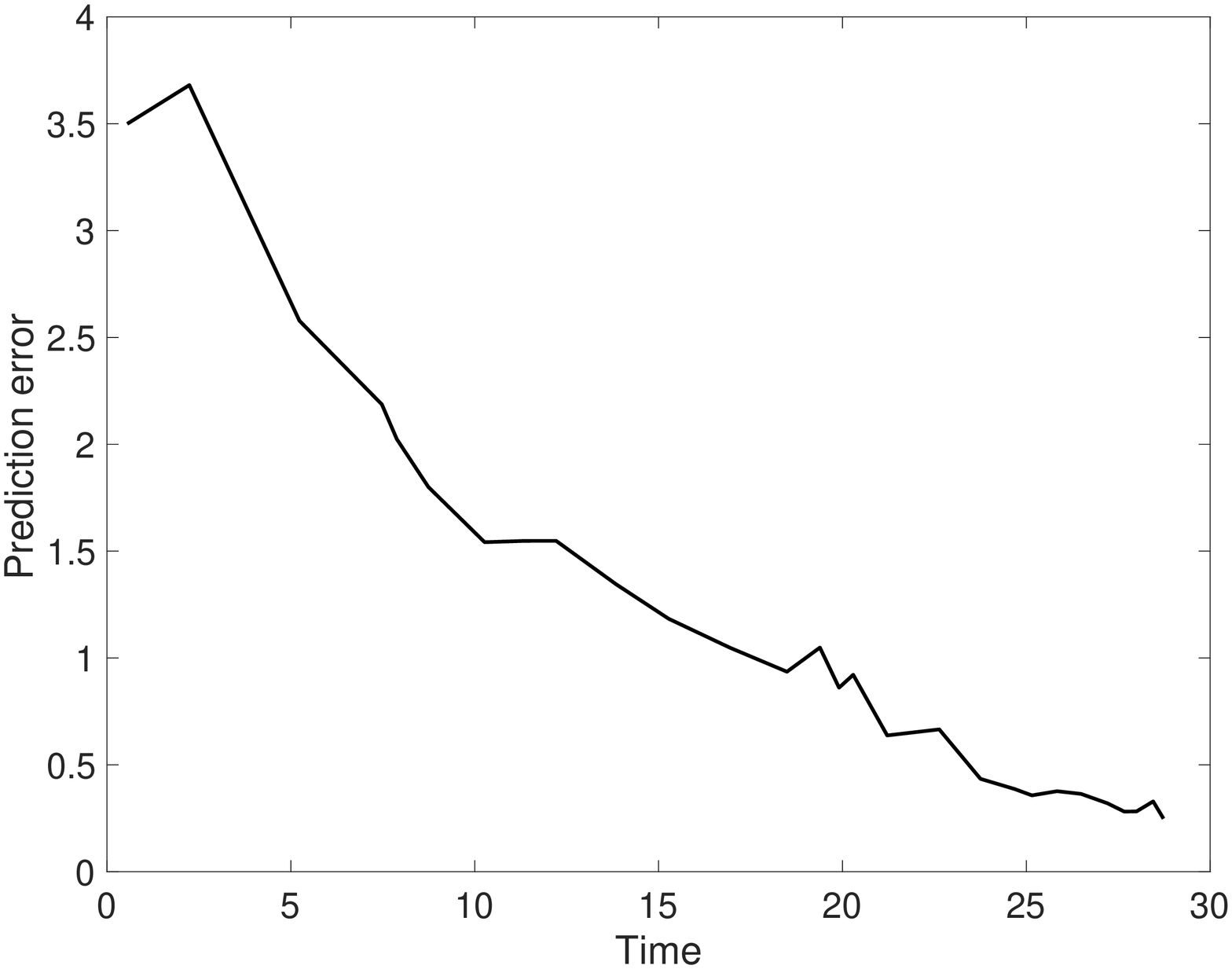}
    \end{subfigure}
    \caption{The unknown function (left); prediction error (right).}
    \label{fig:sinc and prediction error}
\end{figure}

\begin{figure*}[t]
\setlength{\abovecaptionskip}{0.0cm}
\setlength{\belowcaptionskip}{-0.5cm}
    \centering
    \begin{subfigure}[b]{0.23\textwidth}
        \centering
        \includegraphics[width=\textwidth]{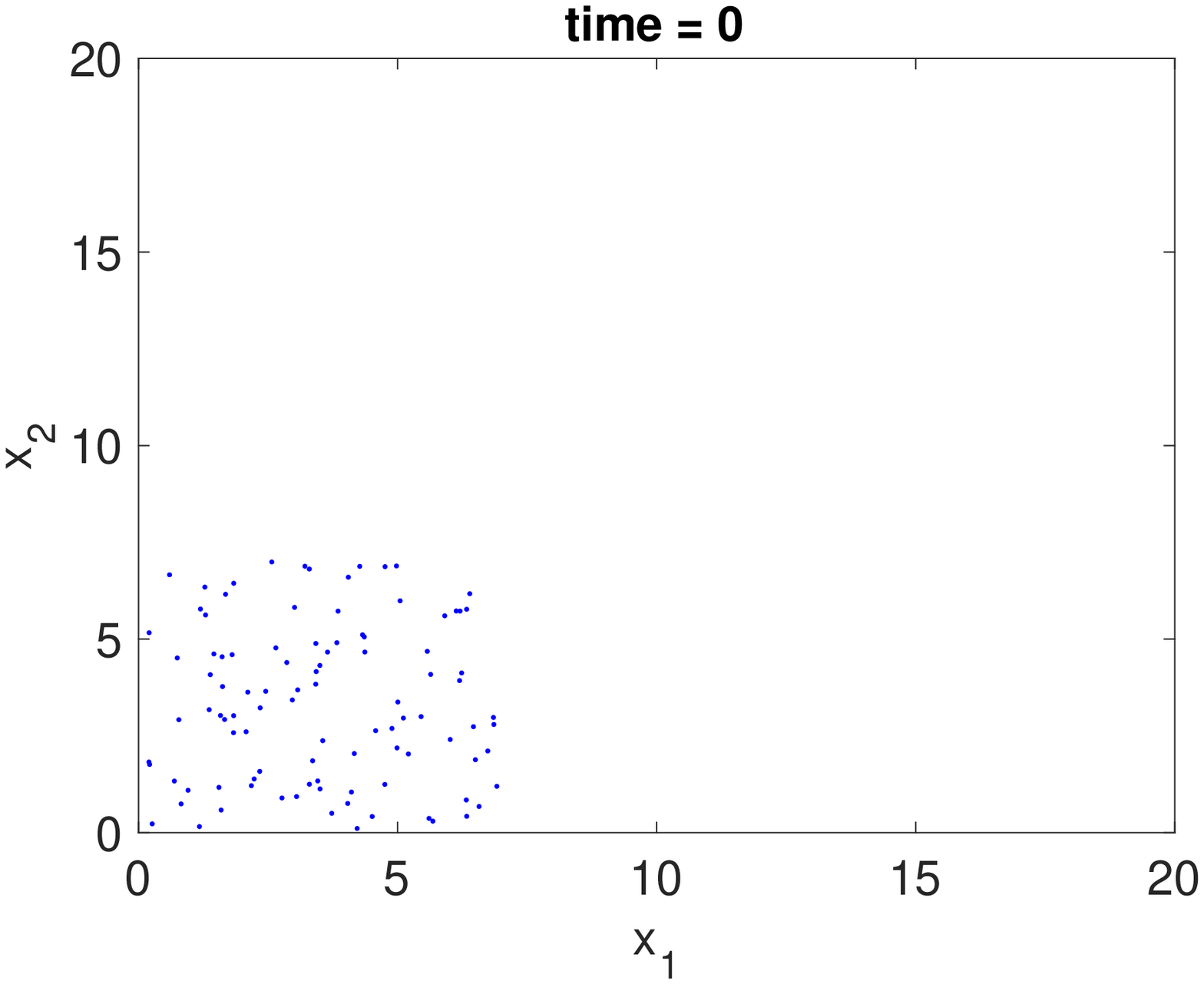}
    \end{subfigure}
    \begin{subfigure}[b]{0.23\textwidth}
        \centering
        \includegraphics[width=\textwidth]{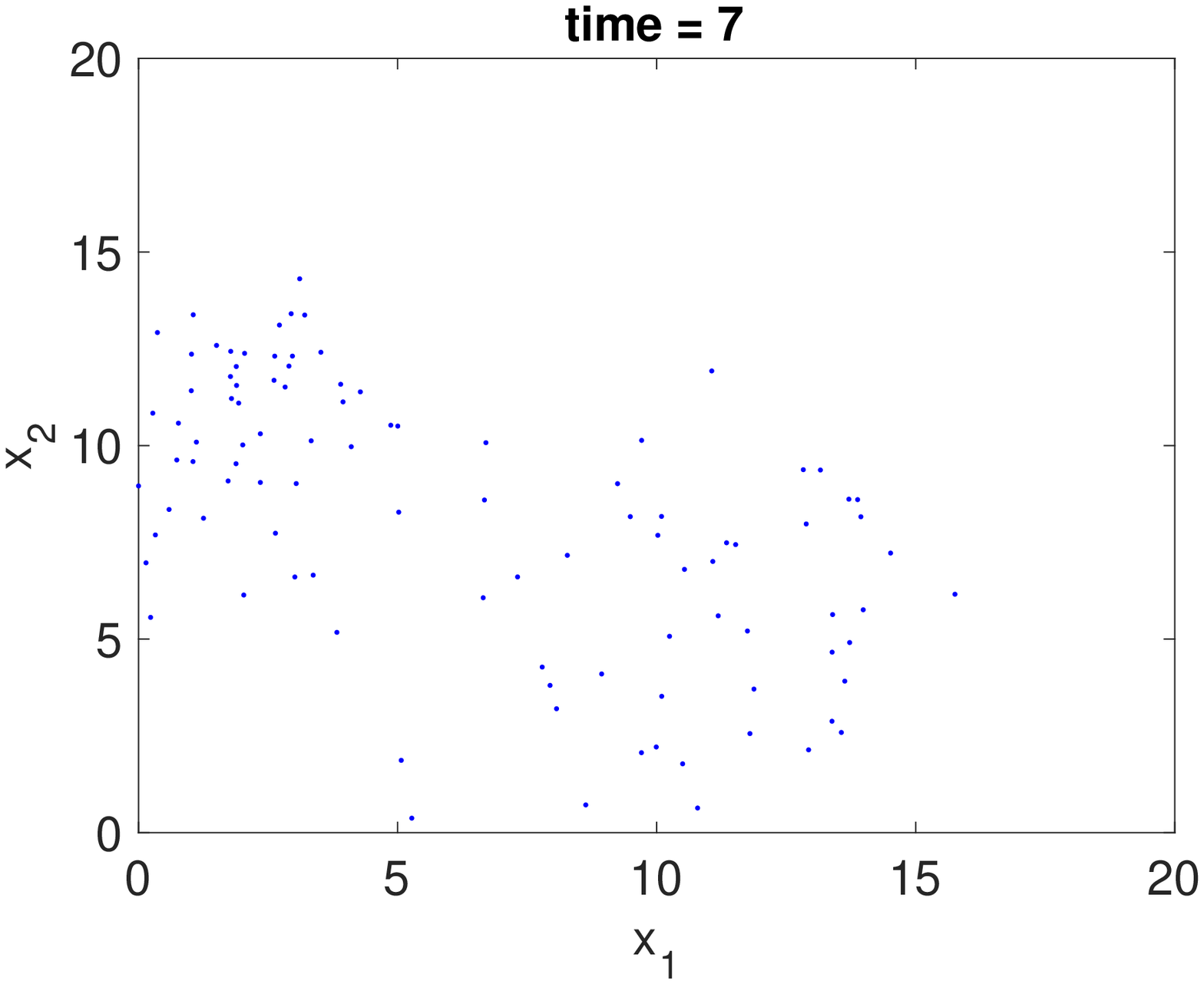}
    \end{subfigure}
    \begin{subfigure}[b]{0.23\textwidth}
        \centering
        \includegraphics[width=\textwidth]{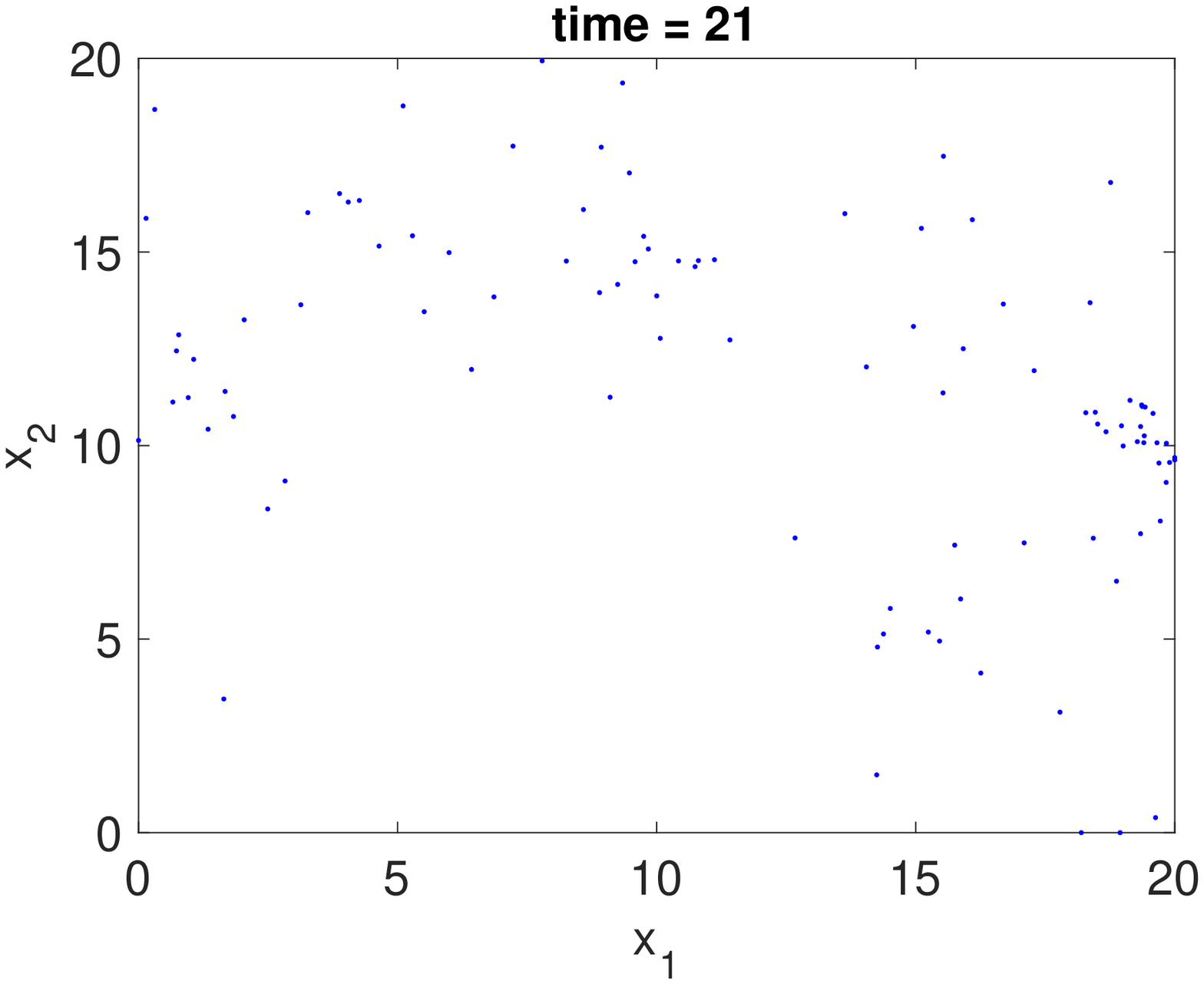}
    \end{subfigure}
    \begin{subfigure}[b]{0.23\textwidth}
        \centering
        \includegraphics[width=\textwidth]{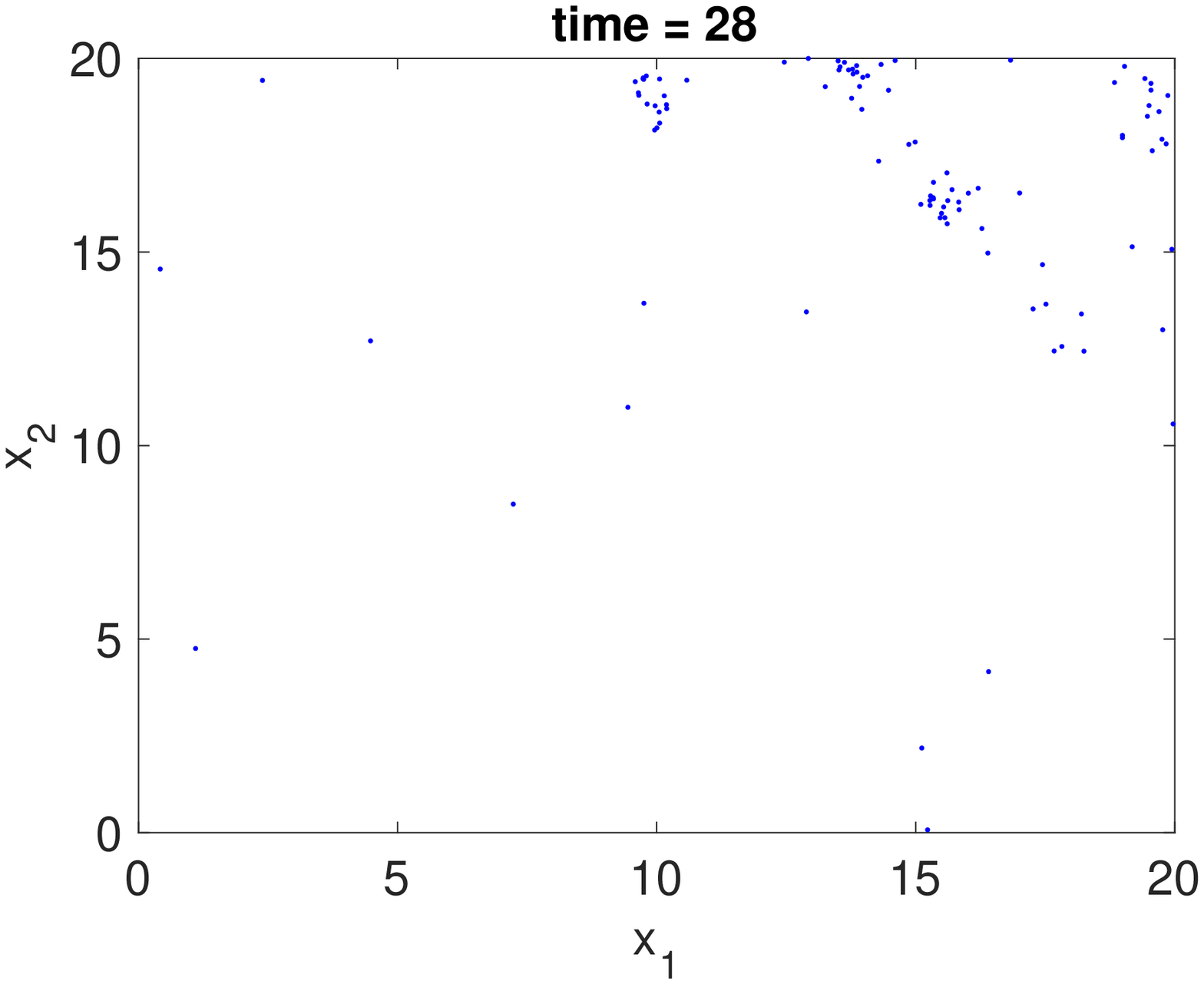}
    \end{subfigure}
    
    \begin{subfigure}[b]{0.23\textwidth}
        \centering
        \includegraphics[width=\textwidth]{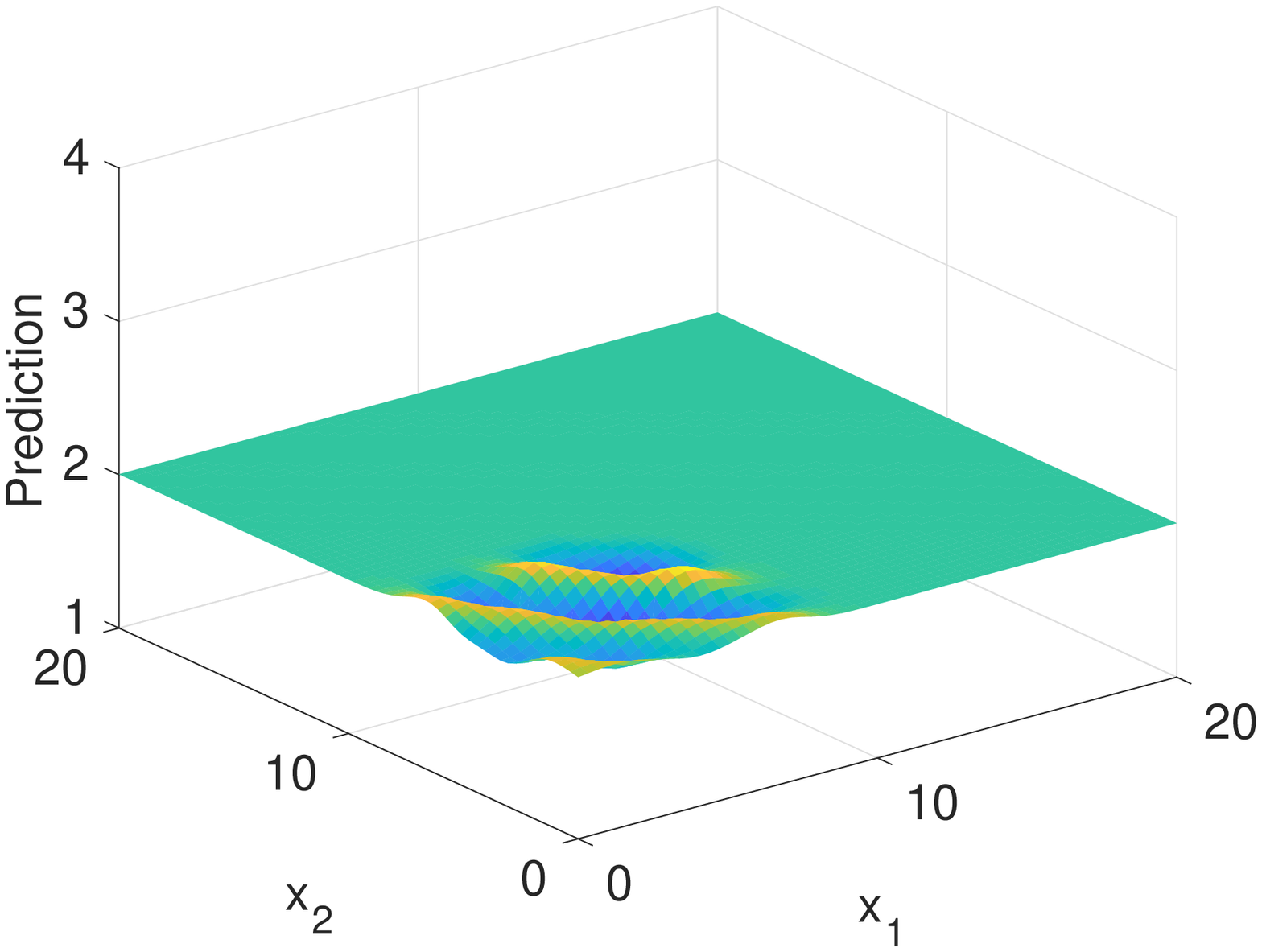}
    \end{subfigure}
    \begin{subfigure}[b]{0.23\textwidth}
        \centering
        \includegraphics[width=\textwidth]{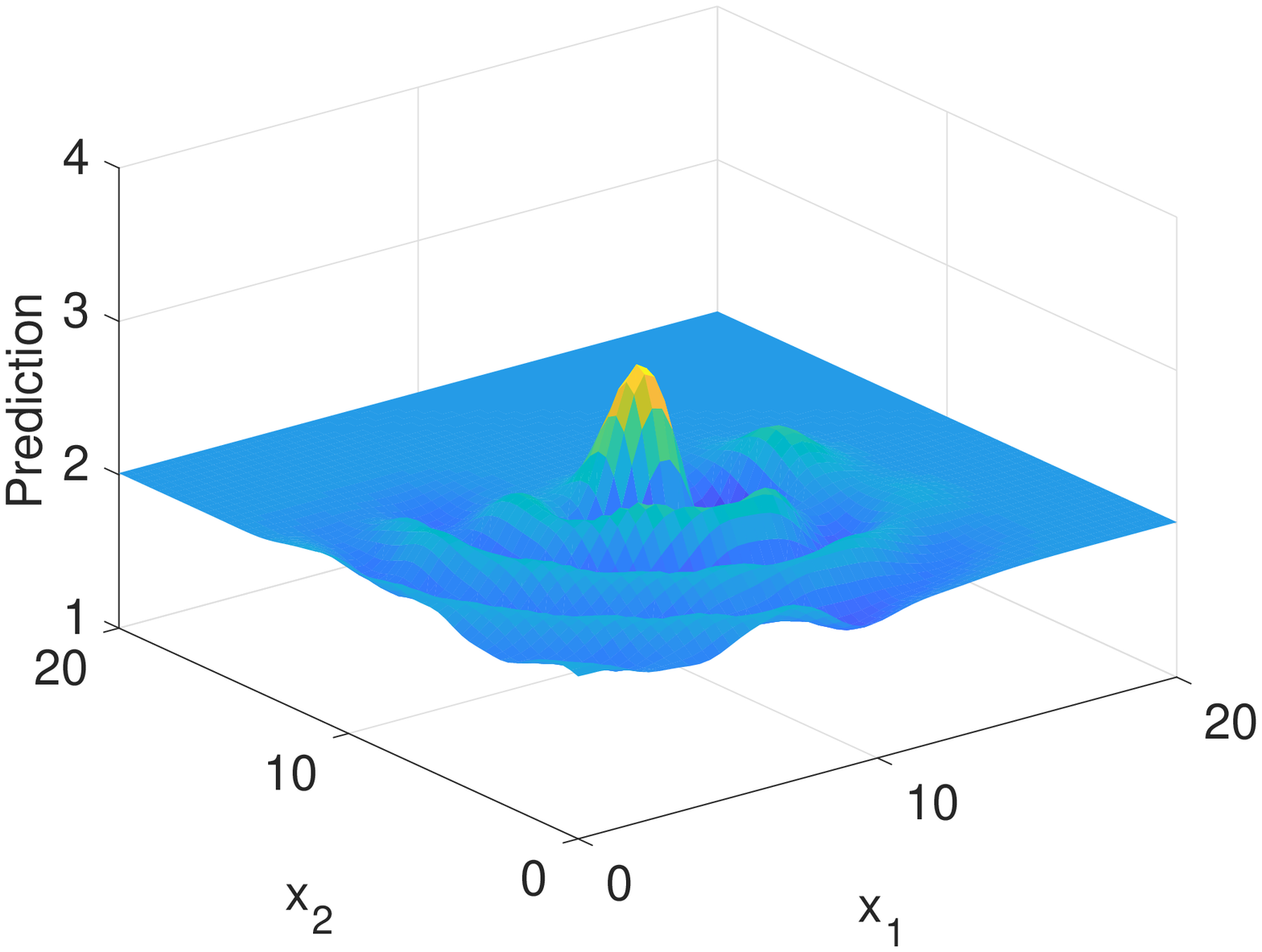}
    \end{subfigure}
    \begin{subfigure}[b]{0.23\textwidth}
        \centering
        \includegraphics[width=\textwidth]{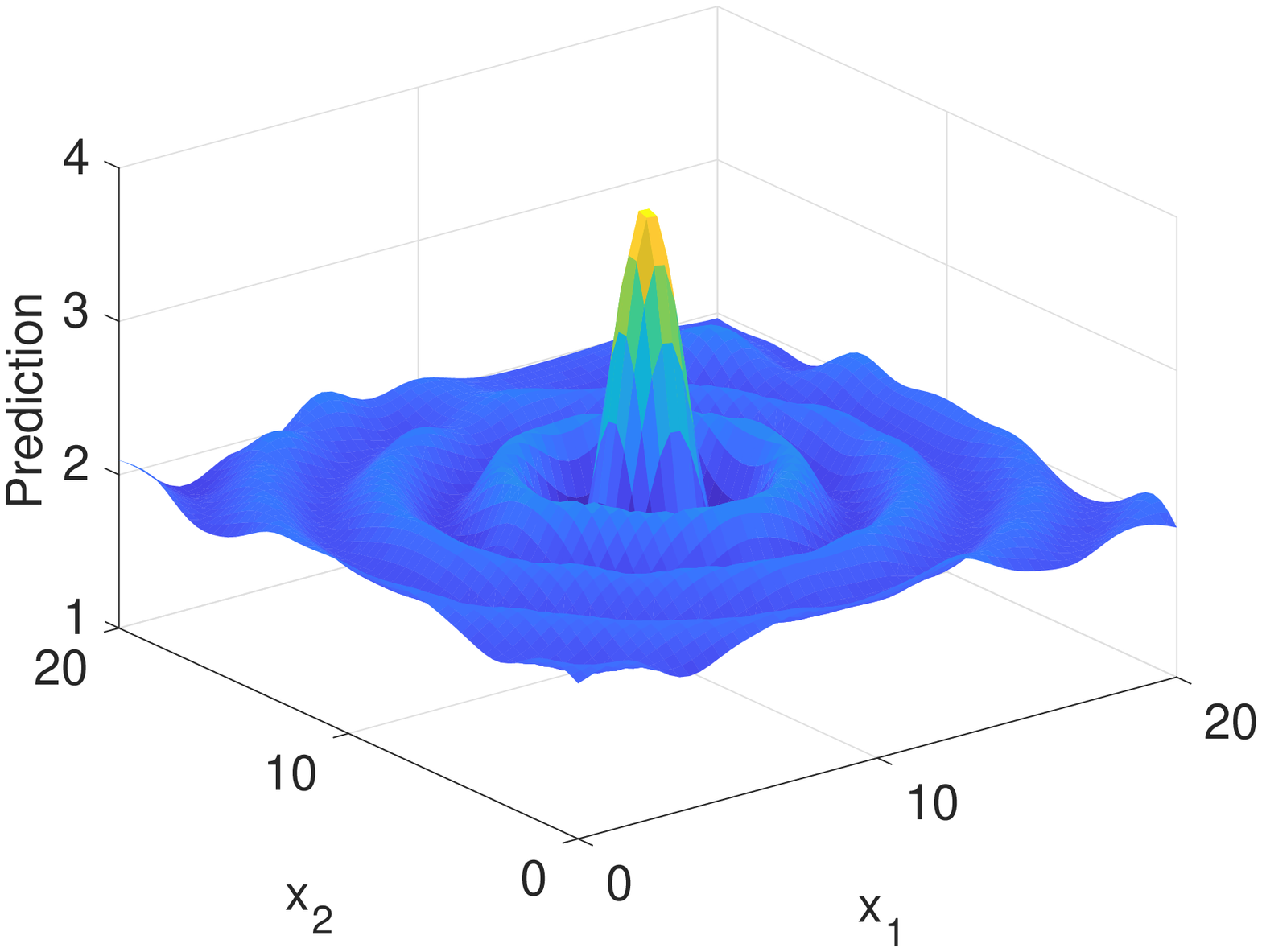}
    \end{subfigure}
    \begin{subfigure}[b]{0.23\textwidth}
        \centering
        \includegraphics[width=\textwidth]{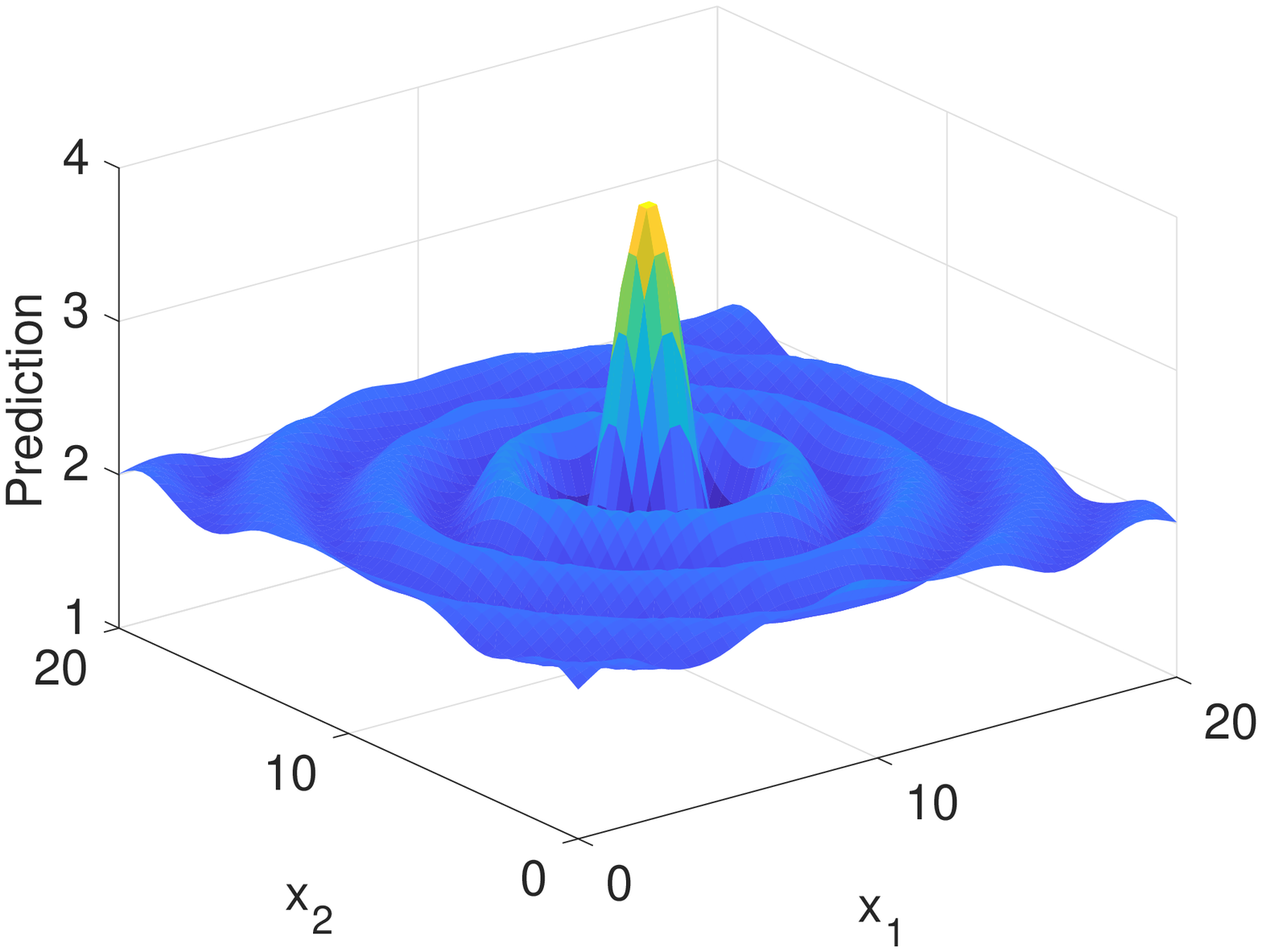}
    \end{subfigure}
    
    \begin{subfigure}[b]{0.23\textwidth}
        \centering
        \includegraphics[width=\textwidth]{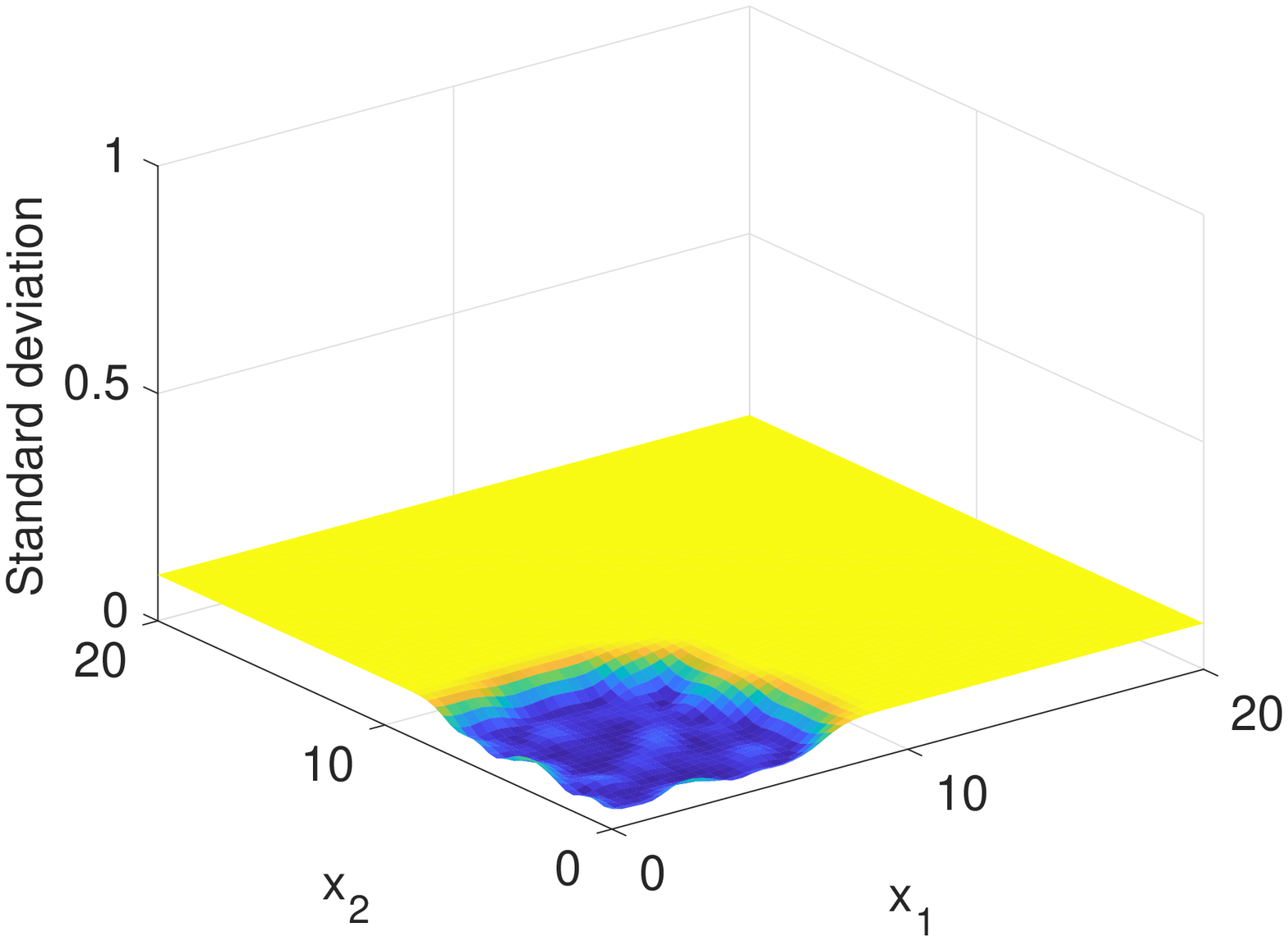}
    \end{subfigure}
    \begin{subfigure}[b]{0.23\textwidth}
        \centering
        \includegraphics[width=\textwidth]{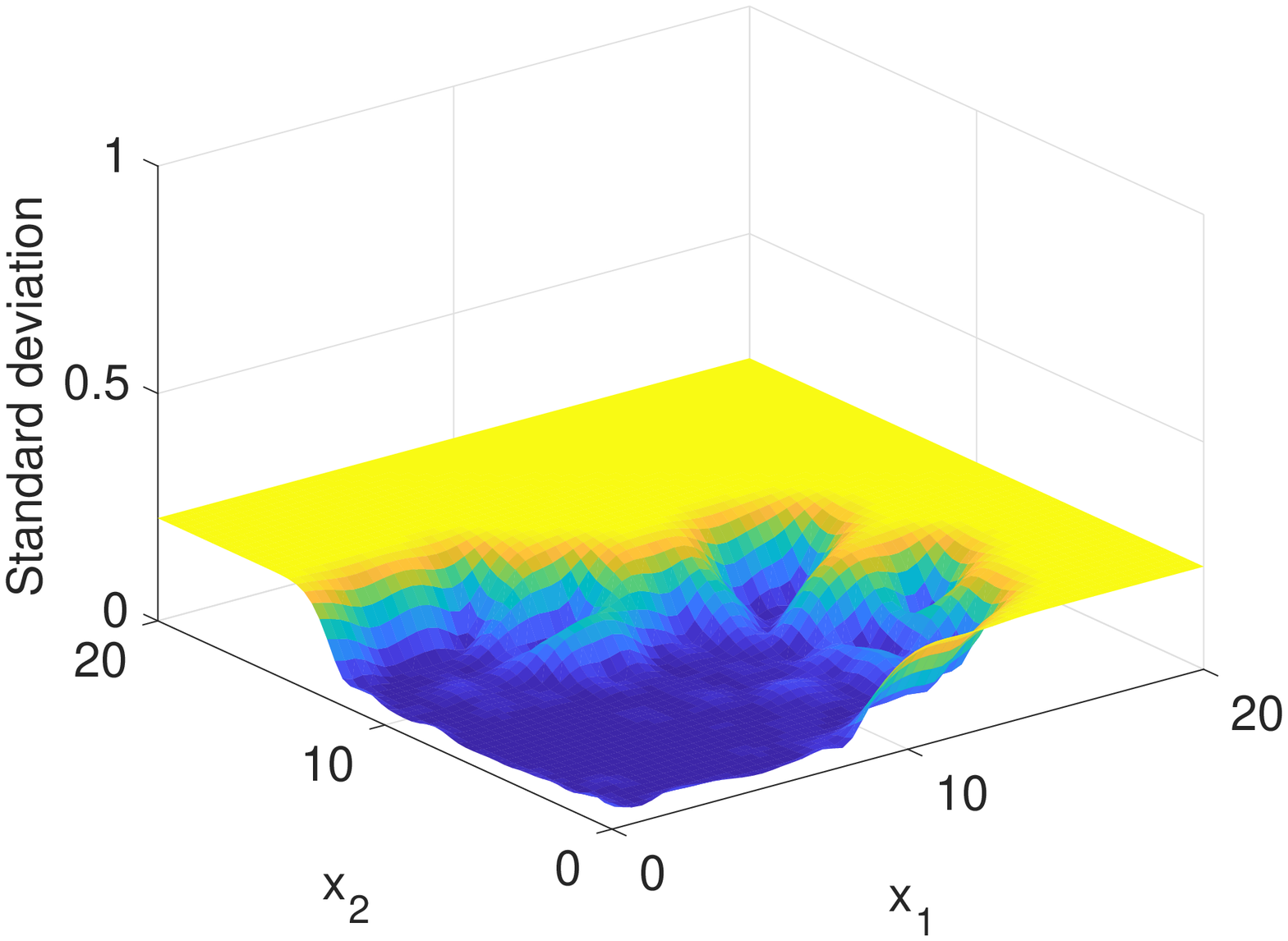}
    \end{subfigure}
    \begin{subfigure}[b]{0.23\textwidth}
        \centering
        \includegraphics[width=\textwidth]{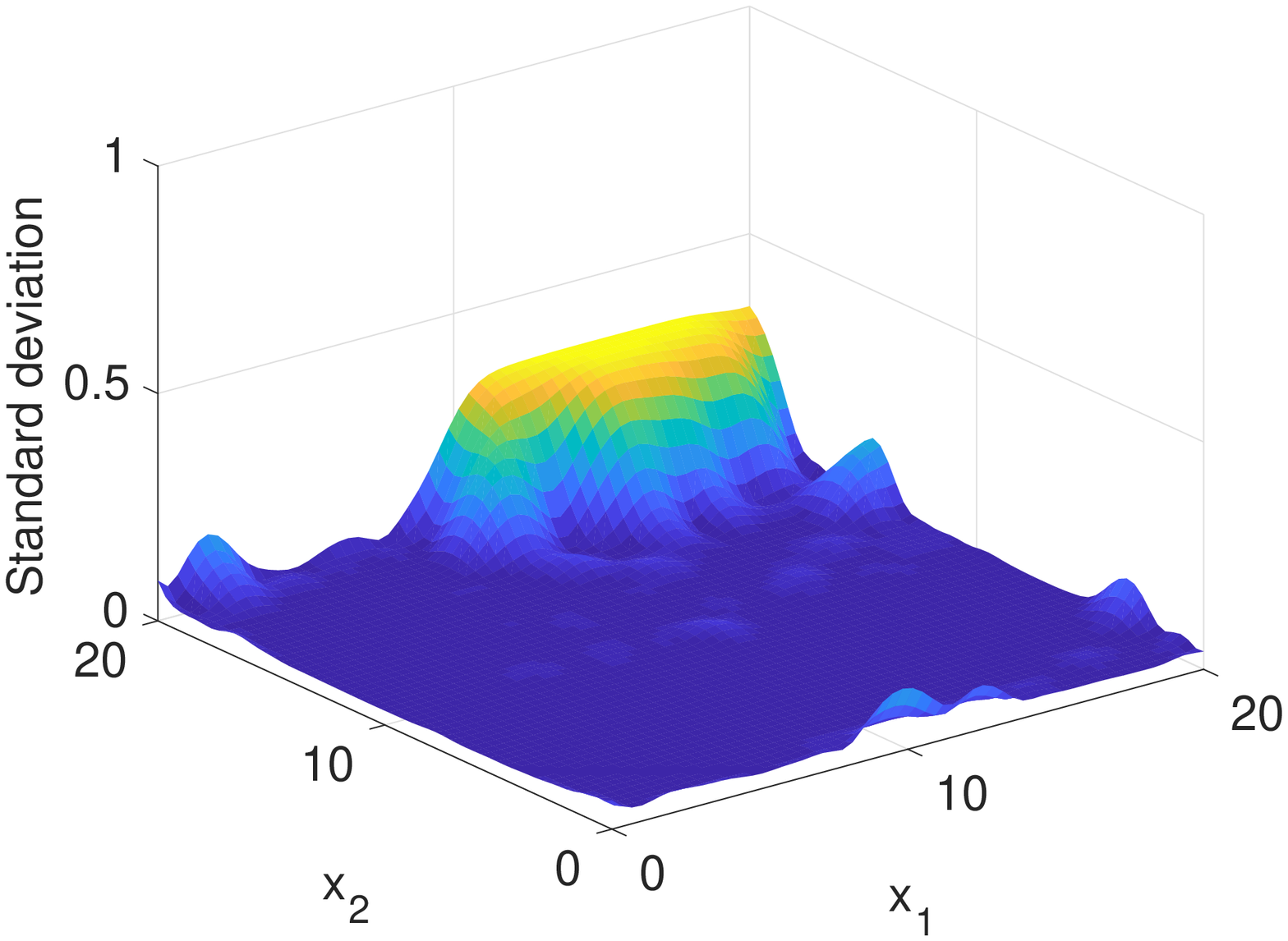}
    \end{subfigure}
    \begin{subfigure}[b]{0.23\textwidth}
        \centering
        \includegraphics[width=\textwidth]{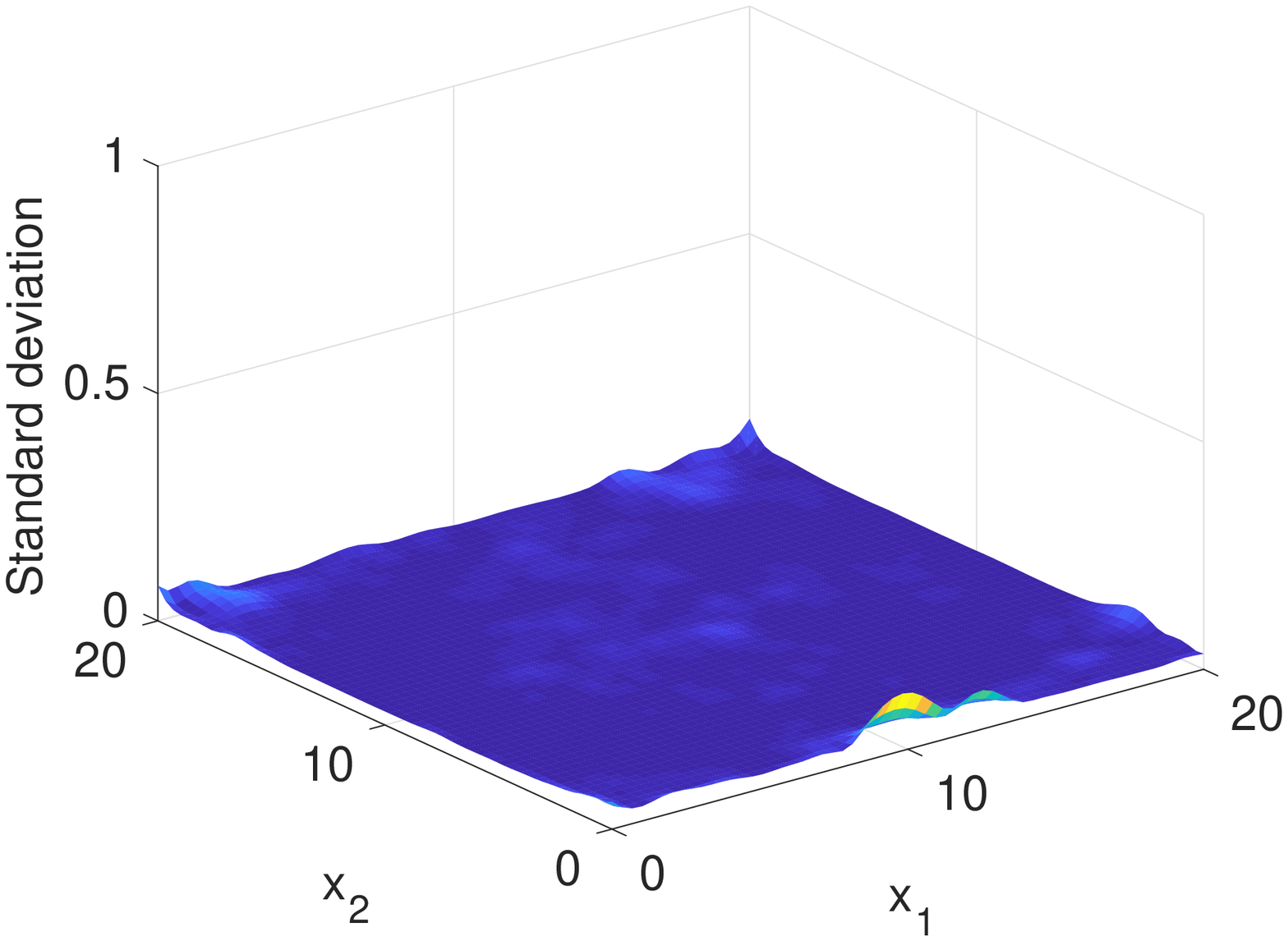}
    \end{subfigure}
    
    \begin{subfigure}[b]{0.23\textwidth}
        \centering
        \includegraphics[width=\textwidth]{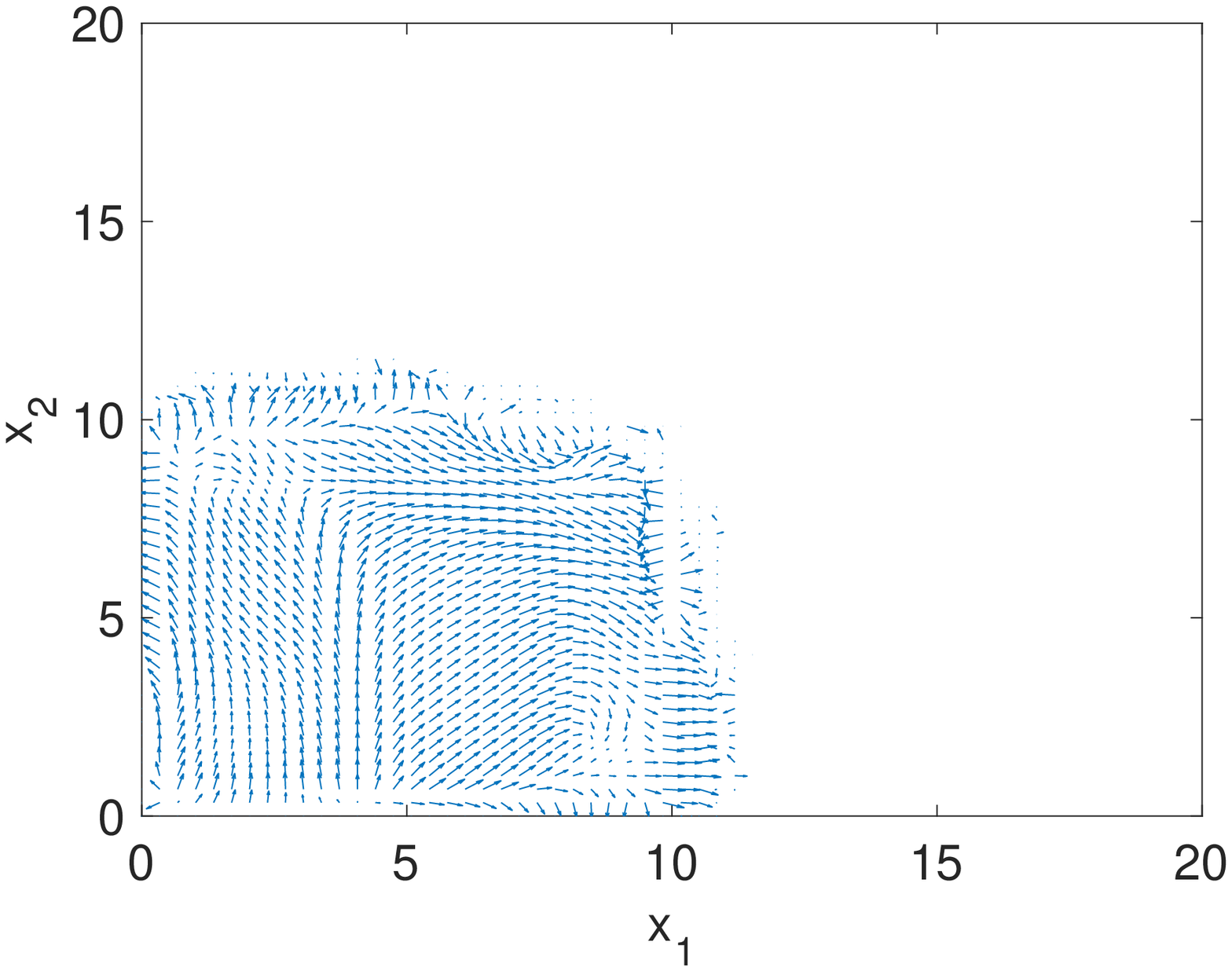}
    \end{subfigure}
    \begin{subfigure}[b]{0.23\textwidth}
        \centering
        \includegraphics[width=\textwidth]{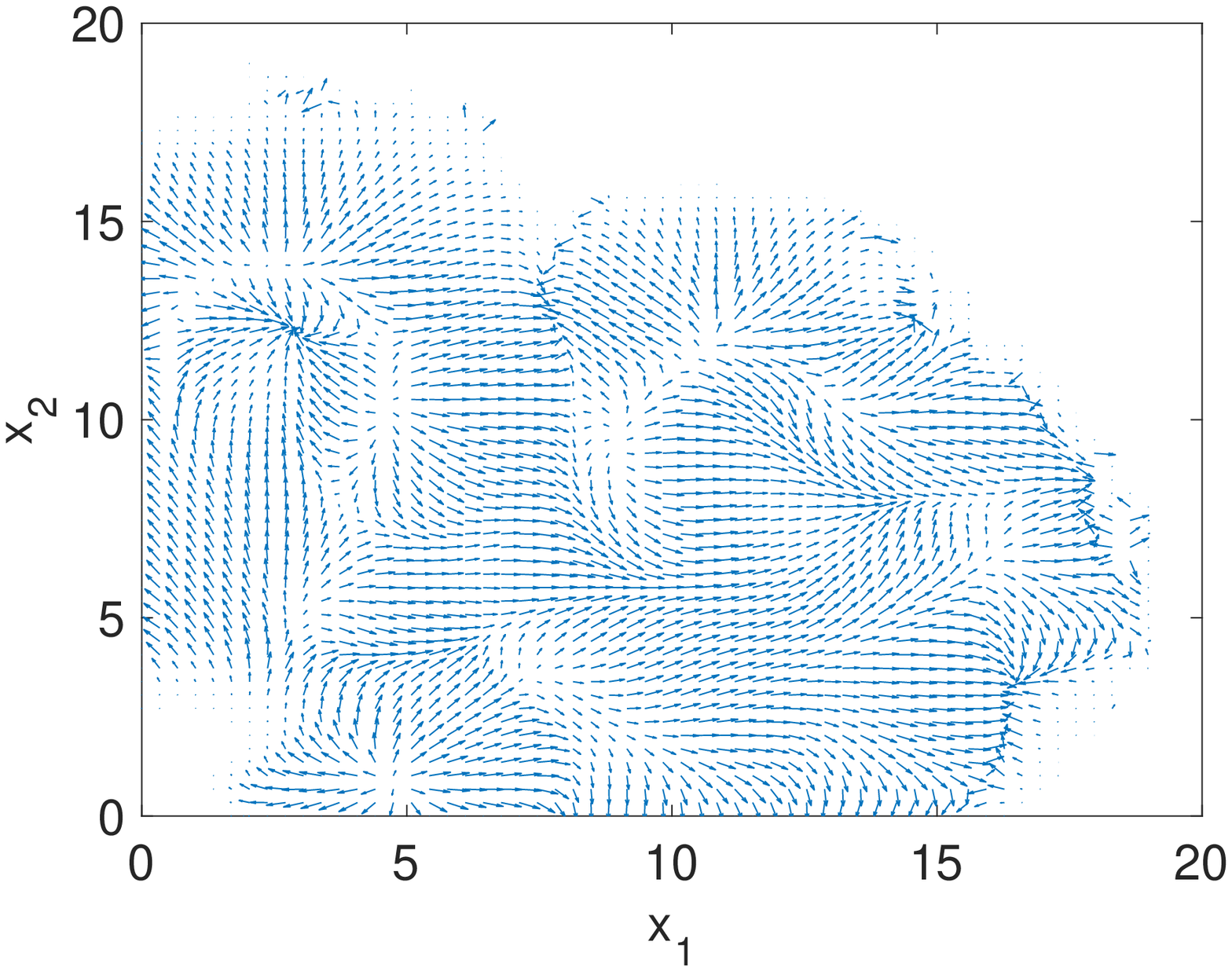}
    \end{subfigure}
    \begin{subfigure}[b]{0.23\textwidth}
        \centering
        \includegraphics[width=\textwidth]{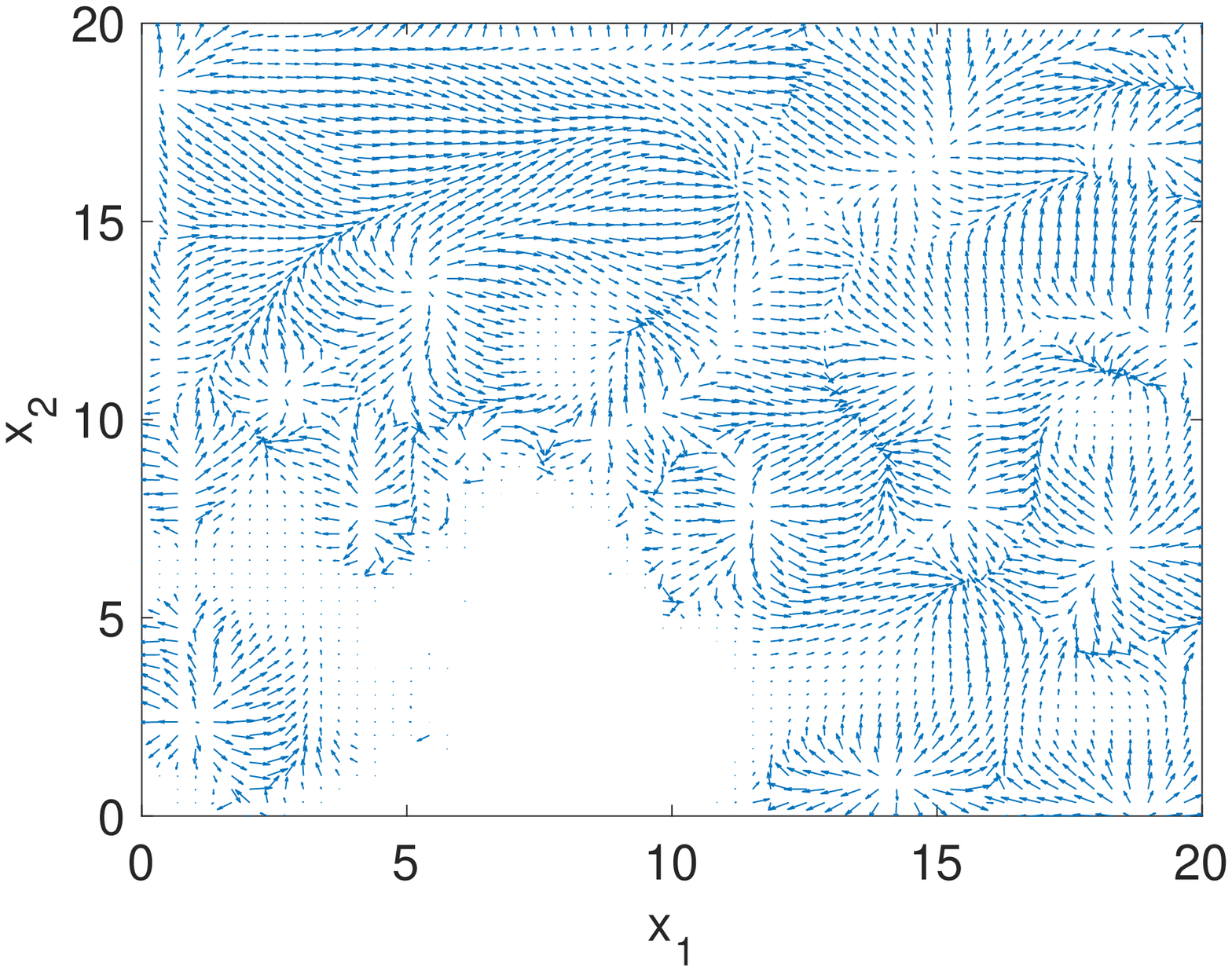}
    \end{subfigure}
    \begin{subfigure}[b]{0.23\textwidth}
        \centering
        \includegraphics[width=\textwidth]{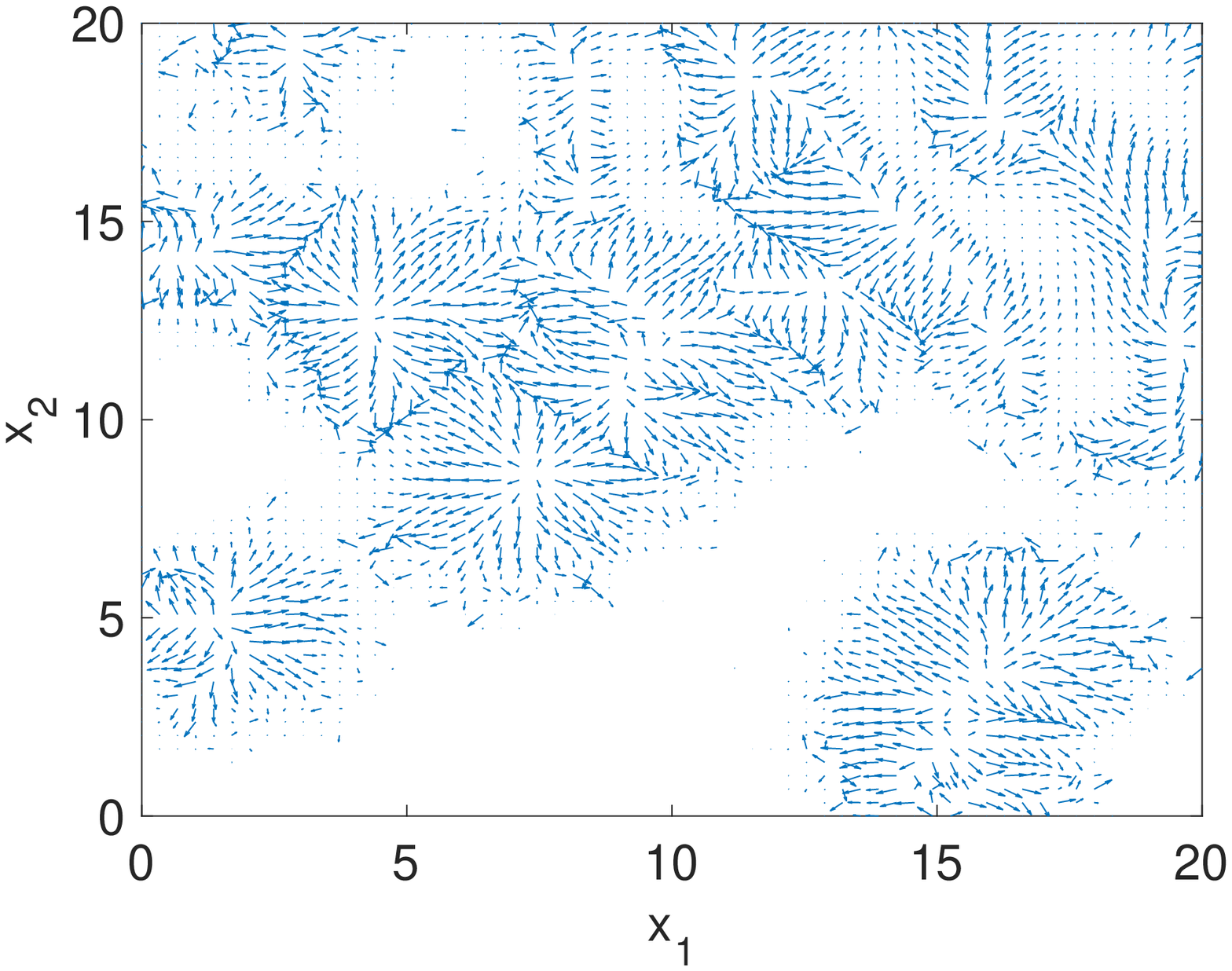}
    \end{subfigure}
    \caption{Movements of the robots (1st row), prediction $\Bar{f}(x)$ using GP regression (2nd row), predictive standard deviation $\mathcal{V}(x)$ of GP regression (3rd row) and generated velocity fields $v$ (4th row).}
    \label{fig:GP and density tracking}
\end{figure*}

The agents' trajectories are simulated using \eqref{eq:Langevin equation}.
Their velocity commands are computed using \eqref{eq:density feedback law using estimation}, where we use \eqref{eq:KDE} to estimate $\Hat{p}$.
{\color{black}
The predictive variance drops below the pre-specified $\gamma=0.1$ threshold after $4T$.
Simulation results are given in Fig. \ref{fig:GP and density tracking}, where the four columns represent four time steps $t=0,T,3T$ and $4T$.
The prediction error $\|f-\Bar{f}\|_{L^2}$ is given in Fig. \ref{fig:sinc and prediction error}.
It is seen that the robots keep moving to areas with larger predictive variance to take more measurements and eventually recover $f(x)$.
This verifies that the proposed algorithm is able to robustly guide the robots based on real-time performance.
}

\section{Conclusion and discussion}\label{section:conclusion}
This work presented a candidate framework for integrating machine learning techniques into the mean-field PDE-based approach for robotic swarms to achieve automatic deployment in real-time. 
We used BR models as an example to illustrate how to generate reference models based on the real-time prediction quality by formulating an optimization problem, and then designed density tracking laws such that the robots' density evolves according to the reference models.
The presented algorithms were applicable for field estimation tasks, while the idea of using machine learning for generating reference models and using density tracking laws for controlling the robots applies to a much wider range of applications of robotic swarms.
The proposed framework was essentially centralized.
Nevertheless, the control strategy was scalable to swarm sizes because each robot only needs to compute its own torque input to follow the velocity field.
Our future work will focus on decentralizing the mean-field feedback control algorithm by integrating the distributed density estimation algorithms that we recently proposed \cite{zheng2020pde, zheng2021distributed}.

%\addtolength{\textheight}{-12cm}   % This command serves to balance the column lengths
                                  % on the last page of the document manually. It shortens
                                  % the textheight of the last page by a suitable amount.
                                  % This command does not take effect until the next page
                                  % so it should come on the page before the last. Make
                                  % sure that you do not shorten the textheight too much.

%%%%%%%%%%%%%%%%%%%%%%%%%%%%%%%%%%%%%%%%%%%%%%%%%%%%%%%%%%%%%%%%%%%%%%%%%%%%%%%%

%%%%%%%%%%%%%%%%%%%%%%%%%%%%%%%%%%%%%%%%%%%%%%%%%%%%%%%%%%%%%%%%%%%%%%%%%%%%%%%%

%%%%%%%%%%%%%%%%%%%%%%%%%%%%%%%%%%%%%%%%%%%%%%%%%%%%%%%%%%%%%%%%%%%%%%%%%%%%%%%%
% \section*{APPENDIX}

% \section*{ACKNOWLEDGMENT}

\bibliographystyle{IEEEtran}
\bibliography{References}

\end{document}